\documentclass[a4paper,aps,prl,showpacs,twocolumn,superscriptaddress]{revtex4-1}   

\usepackage[utf8]{inputenc}  
\usepackage[T1]{fontenc}     
\usepackage[british]{babel}  
\usepackage{lmodern}  
\usepackage[scaled=1.03]{inconsolata} 
\usepackage{xcolor} 
\usepackage[colorlinks]{hyperref}  
\definecolor{darkred}{rgb}{0.5,0,0}
\definecolor{darkgreen}{rgb}{0,0.5,0}
\definecolor{darkblue}{rgb}{0,0,0.5}
\hypersetup{colorlinks,breaklinks,linkcolor=darkred,citecolor=darkgreen,urlcolor=darkblue}
\usepackage{graphicx} 
\usepackage{tikz}
\usepackage[babel]{microtype}  
\usepackage{amsmath,amssymb,amsthm,bm,mathtools,amsfonts,mathrsfs,bbm,dsfont} 
\usepackage{xspace}  
\usepackage{multirow}
\usepackage{verbatim}

\usepackage{physics}
\newcommand{\id}{\ensuremath{\mathds{1}}}
\usepackage{bbold}

\renewcommand{\vec}[1]{\boldsymbol{#1}}
\renewcommand{\leq}{\leqslant}

\renewcommand{\geq}{\geqslant}

\renewcommand{\>}{\rangle}
\newcommand{\hil}{\mathcal{H}}
\renewcommand{\tr}[1]{{\rm tr}\left[#1\right]}

\newcommand{\mc}[1]{\mathcal{#1}}

\newcommand{\N}{\mathbb{N}}
\newcommand{\R}{\mathbb{R}}
\newcommand{\C}{\mathbb{C}}

\newtheoremstyle{mystyle}
{6pt}
{6pt}
{\normalfont}
{0pt}
{\bf}
{.}
{ }
{}

\theoremstyle{mystyle}
\newtheorem{theorem}{Theorem}
\newtheorem{lemma}{Lemma}

\newtheorem{observation}{Observation}

\begin{document}
\nonfrenchspacing
\title{An operational characterization of infinite-dimensional quantum resources}

\author{Erkka Haapasalo}
\email{erkkath@gmail.com}
\affiliation{Centre for Quantum Technologies, National University of Singapore, Science Drive 2 Block S15-03-18, Singapore 117543}

\author{Tristan Kraft}
\affiliation{Naturwissenschaftlich-Technische Fakult\"at, Universit\"at Siegen, Walter-Flex-Str.~3, D-57068 Siegen, Germany}

\author{Juha-Pekka Pellonp\"a\"a}
\affiliation{QTF Centre of Excellence, Turku Centre for Quantum Physics, Department of Physics and Astronomy,
  University of Turku, FI-20014 Turun yliopisto, Finland
}

\author{Roope Uola}
\affiliation{D\'{e}partement de Physique Appliqu\'{e}e, Universit\'{e}  de Gen\`{e}ve, CH-1211 Gen\`{e}ve, Switzerland}

\date{\today}  

\begin{abstract}
 Recently, various non-classical properties of quantum states and channels have been characterized through an advantage they provide in specific quantum information tasks over their classical counterparts. Such advantage can be typically proven to be quantitative, in that larger amounts of quantum resources lead to better performance in the corresponding tasks. So far, these characterizations have been established only in the finite-dimensional setting. In this manuscript, we present a technique for extending the known results to the infinite-dimensional regime. The technique relies on approximating infinite-dimensional resource measures by their finite-dimensional counterparts. We give a sufficient condition for the approximation procedure to be tight, i.e. to match with established infinite-dimensional resource quantifiers, and another sufficient condition for the procedure to match with relevant extensions of these quantifiers. We show that various continuous variable quantum resources fall under these conditions, hence, giving them an operational interpretation through the advantage they can provide in so-called quantum games. Finally, we extend the interpretation to the max relative entropy in the infinite-dimensional setting.
\end{abstract}

\maketitle

\textit{Introduction.---} The idea of unified treatment of the non-classical properties of quantum mechanics has led to the development of quantum resource theories~\cite{ReviewQRT}. Quantum resource theories are built on the notions of free states and free operations. The former are quantum objects that have no resource content, and the latter are maps that do not convert free states into resource states. As an example, in the resource theory of entanglement, separable states are free, and local operations assisted by classical communication are free operations. To quantify the amount of a given resource, quantum resource theories use measures that are faithful, i.e., obtain the value zero exactly on the free set, and do not increase under the action of free operations. The well-known max-relative entropy of entanglement~\cite{Datta2009a,Datta2009b}, the robustness of coherence~\cite{NBC+16}, and the accessible information in quantum-to-classical channels~\cite{SL19} are examples of such measures. Exploiting the convex structure of quantum resource theories various resource quantifiers have been shown to have an interpretation as the advantage a resource can provide over its classical counterparts in some quantum information task~\cite{Piani15,NBC+16,Takagi19,BCP19,TakagiRegula19,SL19,LPS19,DLPS20,Uola20}. For example, the best separable approximation~\cite{Lewenstein1998} is known to equal the overhead, that an entangled state can provide over all separable states in the task of subchannel exclusion~\cite{Ducuara20,UolaBullock20}, and the incompatibility robustness~\cite{Uola15} is known to equal the advantage that incompatible measurements provide over compatible ones in the task of state discrimination~\cite{Carmeli19,Skrzypczyk19,Oszmaniec19,Buscemi20,Uola19}.

So far, the connections between resource quantifiers and quantum tasks have been limited to the case of finite-dimensional resources, with the exception of measurement resources, for which the connection has been recently extended to include also the case of continuous variable systems~\cite{Kuramochi20}. In this manuscript, we provide a method for extending the known finite-dimensional results to the infinite-dimensional setting in the missing cases of quantum channels and quantum states. We perform the extension as a limit procedure, that exploits the known finite-dimensional results. Although the procedure could, in principle, be applied to various quantifiers, we concentrate on the resource measures, which utilise the convex structures in quantum resource theories: the generalized robustness, free robustness, and the convex weight.

Our extension procedure results in resource quantifiers, which enjoy many properties required from a proper resource measure: monotonicity, convexity, and lower semi-continuity. In other words, free operations and convex mixtures do not increase the quantifiers, and the ability of the quantifier to witness resources is stable under small fluctuations. However, the further desirable properties of faithfulness and interpretation as performance in quantum tasks require some more careful consideration.

First, in the finite-dimensional case, faithfulness, as well as the known connections between quantifiers and quantum tasks, require the set of free objects to be closed. This is also expected in the infinite-dimensional case, where, however, the sets might have non-equivalent closures. We show that a natural choice for this closure in our framework arises from the relevant quantum tasks. For sets of free objects that are closed in this sense, the extension procedure results in established faithful infinite-dimensional quantifiers which we show to have an interpretation as performance in some quantum game. It is worth noting that, by our construction, the value of the known quantifiers can be approximated to an arbitrary precision using only finite-dimensional systems.

Second, the extended quantifiers can, in principle, depend on the extension procedure. Whereas this is not the case when the above closure requirement is fulfilled, we introduce another condition on the free set, under which there exists a family of equivalent extension procedures. This requires the possibility of an extension procedure that uses only free operations. Under this condition, the extension procedures result in quantifiers, for which the game interpretation holds. Furthermore, these quantifiers are faithful on the closure of the free set. This is in line with the finite-dimensional setting, in which the game-based quantifiers can be used to separate the free set from objects that are outside of its closure.

In what follows, we give examples of various state and channel resources in the infinite-dimensional setting, that fall under the above requirements on the free set, i.e. they allow for a quantifier with the game interpretation. The examples include entanglement, coherence, asymmetry, the quantum marginal problem, broadcastability (or, more generally, compatibility) of quantum channels, and entanglement breaking channels. Moreover, as the generalized robustness measure is known to be connected to the max-relative entropy, we show that the latter gets an operational interpretation in the infinite-dimensional setting whenever the relative set, i.e. the free set, fulfills the above closure condition.

\textit{Resource quantification in the finite-dimensional setting.---} We concentrate on two main types of resource quantifiers in this manuscript. The first consists of a family of robustness measures $\mathcal R_{F,N}$, where $F$ is the set of free channels, and $N$ is the set of channels that represents the noise. In our applications the noise set will be either all channels (yielding the generalised robustness) or the free set (yielding the free robustness). Robustness measures quantify the amount a given resource channel $\Lambda$ can resist mixing with noise channels $\tilde\Lambda\in N$ before the resource is lost. Formally, we have
\begin{equation} \label{eqn:Robustness}
  \mathcal R_{F,N}(\Lambda) = \min\qty{t\geq 0\,\,\bigg|\,\, \frac{\Lambda + t\tilde\Lambda}{1+t}\in F,\,\tilde\Lambda\in N}.
\end{equation}
Here the free set $F$ is considered to be convex and closed. We note that whereas the choice of the topology in which the set $F$ is closed will be important for our main findings, in the finite-dimensional case all closures are equivalent.

The second type of resource quantifier is the convex weight $W_F$, i.e., the best free approximation of a resource channel $\Lambda$. Formally, we define
\begin{equation} \label{eqn:Weight}
  \mathcal W_F(\Lambda) = \min\qty{\mu\geq 0\,\,\bigg|\,\,\Lambda= \mu\Gamma+(1-\mu)\Lambda_F},
\end{equation}
where the optimization is over all channels $\Gamma$ and all free channels $\Lambda_F\in F$. A possible intuition behind the convex weight is the question of how frequently a free channel $\Lambda_F$ can be used in the preparation procedure of a resource channel $\Lambda$.

Both quantifiers $\mathcal R_{F,N}$ and $\mathcal W_F$ can be cast as conic programs, which allows their evaluation in the dual conic form, see for example~\cite{TakagiRegula19,Uola20} for the robustness and~\cite{Ducuara20,UolaBullock20} for the weight. For the dual formulation, we need the Choi presentation of the channels, i.e. $J_\Lambda:=\frac{1}{d}\sum_{ij}|i\rangle\langle j|\otimes\Lambda(|i\rangle\langle j|)$. In the Choi picture, the duals read
\begin{eqnarray}
\label{eqn:RobSetsChan}
1+\mathcal{R}_F(\Lambda) = \max_{Y}\, && \text{tr}[Y J_{\Lambda}] \\
\text{s.t.: }&& Y \geq 0, \quad \text{tr}[YT]\leq 1\, \forall\ T\in J_{F}, \notag
\end{eqnarray}
and
\begin{eqnarray}
\label{eqn:WeightChanDual}
1-\mathcal{W}_F(\Lambda) = \min_{Y}\, && \text{tr}[Y J_{\Lambda}] \\
\text{s.t.: }&& Y \geq 0, \quad \tr{YT}\geq 1\, \forall\ T\in J_{F}, \notag
\end{eqnarray}
where $Y$ constitutes a witness and $J_{F}$ is the image of the free set $F$ under the Choi isomorphism.

It should be noted, that the evaluation through the dual form holds when the so-called Slater conditions hold. First, this requires the problems to be finite, and, second, for the convex weight, the Slater condition can be verified by choosing $Y=\alpha\openone$ for large enough $\alpha>0$, and for the robustness one requires the existence of a point $\Lambda_F\in F$ and a number $\alpha>1$ such that $\alpha J_{\Lambda_F}-J_\Lambda$ is an interior point of the cone $C_{J_N}$ defined by the noise set $N$. In our noise sets, the existence of some full-rank point in the free set guarantees the Slater condition to hold.

The dual formulation of these resource measures can be used to give them an operational meaning in terms of the performance a channel $\Lambda$ provides in a discrimination~\cite{TakagiRegula19,Uola20} or an exclusion~\cite{UolaBullock20} input-output games. The type of game $\mathcal G$ we are interested in consists of an input set of quantum states $\{\varrho_a\}_a$, i.e., positive unit-trace operators, a quantum measurement $\{M_b\}_b$, that is a positive operator valued measure (or POVM for short), i.e., $M_b\geq0$ for every $b$ and $\sum_b M_b=\openone$, where $\openone$ is the identity operator, and a score assignment $\{\omega_{ab}\}_{ab}$. The payoff $\mathcal P$ of the game $\mathcal G$ for a given channel $\Lambda$ is cast as
\begin{align}\label{eqn:Payoff}
    \mathcal P(\Lambda,\mathcal G)=\sum_{a,b}\omega_{ab}\text{tr}[\Lambda(\varrho_a) M_b].
\end{align}

To relate such games to the resource quantifiers, one can interpret a witness $Y$ as a game. To do so, we write $Y=\sum_{ab}\omega_{ab}\varrho_a^T\otimes M_b$, see Ref.~\cite{Rosset2018}. This way, the object functions of Eq.~(\ref{eqn:RobSetsChan}) and Eq.~(\ref{eqn:WeightChanDual}) are special instances of the payoff function in Eq.~(\ref{eqn:Payoff}). Clearly, a pair of optimal witnesses $Y^r$ for the robustness and $Y^w$ for the weight, with their respective instances of the game $\mathcal G_{Y^r}$ and $\mathcal G_{Y^w}$, lead to the inequalities
\begin{align}
\label{eqn:ChanRobineq}
    \frac{\mathcal P(\Lambda,\mathcal G_{Y^r})}{\max_{\Lambda_F\in F} \mathcal P(\Lambda_F,\mathcal G_{Y^r})}\geq 1+ \mathcal R_{F,N}(\Lambda)
\end{align}
and
\begin{align}
\label{eqn:ChanWeiineq}
    \frac{\mathcal P(\Lambda,\mathcal G_{Y^w})}{\min_{\Lambda_F\in F} \mathcal P(\Lambda_F,\mathcal G_{Y^w})}\leq 1- \mathcal W_F(\Lambda).
\end{align}
One way to make these inequalities tight is to write $\Lambda$ using Eq.~(\ref{eqn:Robustness}) and Eq.~(\ref{eqn:Weight}), and notice that the games are linear in $\Lambda$. This leads to
\begin{align}
\label{eqn:ChanRobAdv}
    \sup_{\mathcal G_N}\frac{\mathcal P(\Lambda,\mathcal G_N)}{\max_{\Lambda_F\in F} \mathcal P(\Lambda_F,\mathcal G_N)}= 1+ \mathcal R_{F,N}(\Lambda)
\end{align}
and
\begin{align}
\label{eqn:ChanWeiAdv}
    \inf_{\mathcal G}\frac{\mathcal P(\Lambda,\mathcal G)}{\min_{\Lambda_F\in F} \mathcal P(\Lambda_F,\mathcal G)}= 1- \mathcal W_F(\Lambda),
\end{align}
where the optimization goes as follows: when the noise set coincides with the set of channels, or one considers the convex weight, the optimization runs over those games $G_N$ that have a non-negative payoff for any channel, and for the case $N=F$, one optimises over those games that have a non-negative payoff for the free set. Also, the games resulting in a zero denominator are excluded.

We note that the above connections between resource quantifiers and quantum games can be also proven for channel tuples. As the extension to this case is straight-forward, i.e. the channels, the respective witnesses, and the games are replaced by tuples, we have spelled the connection in the Appendix~\hyperref[app:tuples]{A}. We further note that quantum states can be seen as quantum channels with a trivial, i.e., one-dimensional, input. Hence, all the above results on quantum channels work also for quantum states. In this case, the corresponding game has a trivial input as well, a fact that is known to render the games into subchannel discrimination~\cite{Takagi19,TakagiRegula19,Uola19,Uola20} or subchannel exclusion tasks~\cite{Ducuara20,UolaBullock20}.

\textit{Resource quantification in the infinite-dimensional setting.---} In this section, we derive the connection between the resource quantifiers and the optimal outperformance in games over free channels for the infinite-dimensional case. We spell out the procedure explicitly using the robustness measures, as the process for the convex weight can be obtained through an even simpler process (as the noise set does not contribute to the convex weight) by interchanging the measures in what follows. The free set of channels $F$ is now a convex subset of the set of all channels (between separable Hilbert spaces $\hil$ and $\mathcal K$). Our method is very general in this framework and only requires that either certain topological constraints are met, or that there is a sound way to restrict the evaluation problem of robustness and convex weight into an approximating sequence of finite-dimensional problems.

There are, in fact, several ways to do the approximation, but we concentrate on a particular method. For this, we assume that there is a channel $\Lambda_F$ in $F$ such that it takes any state into a faithful state, i.e. a state whose eigenvalues are strictly positive. We say that two sequences $(\alpha_n)_n$ and $(\beta_n)_n$ of channels are an approximation if $\alpha_n$ is a channel within a finite-dimensional subspace $\mc H_n$ of $\hil$ and $\beta_n$ is within a finite-dimensional subspace $\mc K_n$ of $\mc K$ and, for any states $\varrho$ on $\hil$ and $\sigma$ on $\mc K$, $\|\varrho-\alpha_n(\varrho)\|_{\rm tr}\to0$ and $\|\sigma-\beta_n(\sigma)\|_{\rm tr}\to0$ as $n\to\infty$, whenever $\varrho$ is supported by $\mc H_n$ then $\alpha_n(\varrho)=\varrho$ and similarly for a state $\sigma$ supported by $\mc K_n$, and, if $\varrho$ is faithful, then $\beta_n(\varrho)$ is of full rank within $\mc K_n$. Moreover, we require that, when $m\leq n$, $\alpha_n\circ\alpha_m=\alpha_m$ and $\beta_m\circ\beta_n=\beta_m$. As an example, whenever $(\mc H_n)_n$ is an increasing sequence of finite-dimensional subspaces of $\hil$ the closure of the union of which is $\hil$, the channels $\alpha_n(\varrho)=P_n\varrho P_n+{\rm tr}[\varrho P_n^\perp]\varrho_0$ where $P_n$ is the orthogonal projection onto $\mc H_n$ and $\varrho_0$ is some fixed state within $\mc H_1$, and a similar choice for $\beta_n$ provide an example of an approximation.

Suppose that $\mathbb A=\{\alpha_n,\beta_n\}_n$ is an approximation procedure and denote $F_n:=\{\beta_n\circ\Lambda_F\circ\alpha_n\,|\,\Lambda_F\in F\}$ and $N_n:=\{\beta_n\circ\Lambda_N\circ\alpha_n\,|\,\Lambda_N\in N\}$. We denote the standard closures of these finite-dimensional sets by $\overline{F}_n$ and $\overline{N}_n$. We easily see (see Appendix~\hyperref[app:Properties]{B}) that the sequence $\big(\mc R^{\mathbb A}_{\overline{F}_n,\overline{N}_n}(\beta_n\circ\Lambda\circ\alpha_n)\big)_{n}$ is increasing. One can now define the approximate robustness $\mathcal R_{F,N}^{\mathbb A}$ as the supremum over $n\in\N$ (which, according to the above is also the limit as $n\to\infty$) of the individual robustnesses $\mc R_{\overline{F}_n,\overline{N}_n}(\beta_n \circ\Lambda\circ\alpha_n)$. As a pointwise supremum of a family of convex lower semi-continuous functions, $\mc R_{F,N}^{\mathbb A}$ is also convex and lower semi-continuous. Also, $\mc R_{F,N}^{\mathbb A}$ is non-increasing under operations that do not map elements of $F$ outside of $F$ (and same for $N$), see Appendix~\hyperref[app:Properties]{B}. Another property usually required from a resource measure is faithfulness, i.e., $\mc R_{F,N}^{\mathbb A}(\Lambda)=0$ if and only if $\Lambda\in F$. As this property depends on the topological properties of the set $F$, we will comment on this later. We note, that this kind of dependency is not specific to the approximate robustness only, but it also affects the original one $\mc R_{F,N}$. 

The results from the previous section on finite-dimensional resources apply to any convex and compact free set $F$, when $N$ is either $F$ or the whole set of channels. As the set $N$ maps surjectively to the corresponding set $N_n$ of channels (free or whole) from $\mc H_n$ to $\mc K_n$, the robustness measure $\mc R_{\overline{F}_n,\overline{N}_n}$ has a game interpretation for each $n$. More specifically, Eq.~(\ref{eqn:ChanRobAdv}) leads to
\begin{align}
\label{eqn:ChanRobsupadv}
    \sup_n\sup_{\mathcal G_n}\frac{\mathcal P(\beta_n\circ\Lambda\circ\alpha_n,\mathcal G_n)}{\sup_{\Lambda_F\in F_n} \mathcal P(\Lambda_F,\mathcal G_n)}= 1+ \mathcal R^{\mathbb A}_{F,N}(\Lambda)
\end{align}
where the second supremum runs over all games $\mc G_n$ on the set of channels between $\mc H_n$ and $\mc K_n$ such that $\mc P(\Lambda_n,\mc G_n)\geq0$ for all $\Lambda_n\in N_n$.

We note, that in principle the quantity $\mathcal R^{\mathbb A}_{F,N}$ can depend on the approximation procedure. However, it is always a lower bound on the actual robustness, as defined in Eq.~(\ref{eqn:Robustness}), and there is a simple sufficient condition for these robustnesses to agree. Namely, if the sets $F$ and $N$ are suitably closed, then the approximation procedure reaches $R_{F,N}$.

A natural choice for the topology is the one related to the games. More precisely, one can ask for any given input, i.e., a trace-class operator $\varrho_a$, and any given output , i.e., a bounded operator $M_b$, whether a sequence (or a net) of channels $(\Lambda_n)_n$ converges to a channel $\Lambda$ in the sense that $\text{tr}[\Lambda_n(\varrho_a)M_b]\to{\rm tr}[\Lambda(\varrho_a)M_b]$ as $n\to\infty$. We denote the topology associated with this type of convergence by $\tau$. In fact, we see that when $F$ is closed w.r.t $\tau$ and $N$ is the whole set of channels or $N$ is $\tau$-closed, these robustnesses coincide, i.e.,\ for any approximation $\mathbb A$, $\mc R^{\mathbb A}_{F,N}=\mc R_{F,N}$. Moreover, in this case it is easy to see, that the approximate robustness (as well as the original one) are faithful. We summarize these ideas in the following Theorem, the detailed proof of which is presented in the Appendix~\hyperref[app:Proof1]{C}.

\begin{theorem}\label{Theorem1}
Let the free set be $\tau$-closed. Whenever $N$ is $\tau$-closed or the whole set of channels, we have $\mathcal R^{\mathbb A}_{F,N}(\Lambda)=\mathcal R_{F,N}(\Lambda)$, and the analogically defined approximate weight $\mc W_F^{\mathbb A}(\Lambda)=\mc W_F(\Lambda)$ for any approximation procedure, and all quantifiers are faithful. If, furthermore, $N=F$ or $N$ is the whole set we have
\begin{align}
    \label{eqn:ChanRobinfadv}
    \sup_{\mathcal G}\frac{\mathcal P(\Lambda,\mathcal G)}{\sup_{\Lambda_F\in F} \mathcal P(\Lambda_F,\mathcal G)}= 1+ \mathcal R_{F,N}(\Lambda),
\end{align}
where the outer supremum runs over those games that have a positive payoff in the set $N$, whenever the right-hand-side is finite. Moreover, one has
\begin{align}
\label{eqn:ChanWeiinfAdv}
    \inf_{\mathcal G}\frac{\mathcal P(\Lambda,\mathcal G)}{\inf_{\Lambda_F\in F} \mathcal P(\Lambda_F,\mathcal G)}= 1- \mathcal W_F(\Lambda),
\end{align}
where the infimum runs over those games that have a positive payoff for any channel, and we omit the games that result in a zero denominator.
\end{theorem}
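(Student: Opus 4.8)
The plan is to reduce the whole statement to the single inequality $\mathcal R^{\mathbb A}_{F,N}(\Lambda)\geq\mathcal R_{F,N}(\Lambda)$, since the reverse inequality and the monotonicity of the approximating sequence are already in hand, the statements for the weight follow by the same but simpler reasoning (there the noise set plays no role), and faithfulness and the game identities will then come out with soft arguments. Throughout I write $\Phi_n(\cdot):=\beta_n\circ(\cdot)\circ\alpha_n$, $\Lambda_n:=\Phi_n(\Lambda)$, and I use that the set of channels between the separable spaces $\mathcal H$ and $\mathcal K$ is $\tau$-compact, so that the $\tau$-closed set $F$ is $\tau$-compact, and so is $N$ whenever it is $\tau$-closed (and trivially when it is everything). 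I would first note that each $\Phi_n$ is $\tau$-to-$\tau$ continuous, because $\Phi_n(\Gamma)$ tested on an input $\varrho$ and an effect $M$ equals $\mathrm{tr}[\Gamma(\alpha_n(\varrho))\,\beta_n^*(M)]$, which depends $\tau$-continuously on $\Gamma$ for the fixed operators $\alpha_n(\varrho)$ and $\beta_n^*(M)$; hence $F_n=\Phi_n(F)$ and $N_n=\Phi_n(N)$ are already $\tau$-closed, i.e.\ $\overline{F}_n=F_n$ and $\overline{N}_n=N_n$, and, being compact convex subsets of a finite-dimensional space, the minimum defining the finite-dimensional robustness $t_n:=\mathcal R_{\overline{F}_n,\overline{N}_n}(\Lambda_n)$ (finite by the standing assumption on $F$) is attained: there are $D_n\in F$ and $E_n\in N$ with $\Lambda_n+t_n\Phi_n(E_n)=(1+t_n)\Phi_n(D_n)$. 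Put $t^\ast:=\sup_n t_n=\mathcal R^{\mathbb A}_{F,N}(\Lambda)$, assuming $t^\ast<\infty$.

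Next I would pass to the limit. Along a subnet, $D_n\to D\in F$ and $E_n\to E\in N$ in $\tau$, while $t_n\to t^\ast$ and $\Lambda_n\to\Lambda$ in $\tau$ (the last because $\|\alpha_n(\varrho)-\varrho\|_{\mathrm{tr}}\to0$, $\|\beta_n(\sigma)-\sigma\|_{\mathrm{tr}}\to0$ and channels are trace-norm contractive). The crux is the convergence $\Phi_n(D_n)\to D$ in $\tau$: for fixed $\varrho$ and $M$ one splits $\mathrm{tr}[\Phi_n(D_n)(\varrho)M]-\mathrm{tr}[D_n(\varrho)M]=\mathrm{tr}[\beta_n(D_n(\alpha_n(\varrho)-\varrho))M]+\mathrm{tr}[D_n(\varrho)(\beta_n^*(M)-M)]$; the first term is bounded by $\|\alpha_n(\varrho)-\varrho\|_{\mathrm{tr}}\to0$, and the second is handled by noting that $\tau$-convergence $D_n\to D$ forces the density operators $D_n(\varrho)$ to converge to $D(\varrho)$ \emph{in trace norm} (a weak-$\ast$-convergent net of density operators converges in trace norm), which, combined with $\mathrm{tr}[D(\varrho)(\beta_n^*(M)-M)]=\mathrm{tr}[\beta_n(D(\varrho))M]-\mathrm{tr}[D(\varrho)M]\to0$, sends it to $0$. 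Since also $\mathrm{tr}[D_n(\varrho)M]\to\mathrm{tr}[D(\varrho)M]$, this gives $\Phi_n(D_n)\to D$, and likewise $\Phi_n(E_n)\to E$. Taking the $\tau$-limit in $\Lambda_n+t_n\Phi_n(E_n)=(1+t_n)\Phi_n(D_n)$ then yields $(\Lambda+t^\ast E)/(1+t^\ast)=D\in F$ with $E\in N$, so $\mathcal R_{F,N}(\Lambda)\leq t^\ast$, that is, $\mathcal R^{\mathbb A}_{F,N}=\mathcal R_{F,N}$; replacing the robustness decomposition by a convex one and pulling the free part back to $F$ gives $\mathcal W^{\mathbb A}_F=\mathcal W_F$ verbatim.

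For faithfulness, if $\mathcal R_{F,N}(\Lambda)=0$ I choose $t_j\downarrow0$ and $\tilde\Lambda_j\in N$ with $(\Lambda+t_j\tilde\Lambda_j)/(1+t_j)\in F$; this tends to $\Lambda$ in the trace norm on each input, so $\tau$-closedness of $F$ gives $\Lambda\in F$, and the converse is trivial — likewise for $\mathcal W_F$ and, through the identities just proven, for the approximate quantifiers. The same compactness argument shows the minima defining $\mathcal R_{F,N}(\Lambda)$ and $\mathcal W_F(\Lambda)$ are attained; from $\Lambda+t\tilde\Lambda=(1+t)\Lambda_F$ with $t=\mathcal R_{F,N}(\Lambda)$, $\tilde\Lambda\in N$, $\Lambda_F\in F$, applying any admissible game $\mathcal G$ together with $\mathcal P(\tilde\Lambda,\mathcal G)\geq0$ gives $\mathcal P(\Lambda,\mathcal G)\leq(1+t)\sup_{\Lambda_F\in F}\mathcal P(\Lambda_F,\mathcal G)$, which is ``$\leq$'' in Eq.~(\ref{eqn:ChanRobinfadv}); the optimal convex decomposition gives ``$\geq$'' in Eq.~(\ref{eqn:ChanWeiinfAdv}) the same way. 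For the reverse inequalities I would embed a finite-dimensional game $\mathcal G_n$ appearing in Eq.~(\ref{eqn:ChanRobsupadv}) (and its weight analogue) into a game $\widetilde{\mathcal G}_n$ on the full spaces, replacing each input $\varrho_a$ on $\mathcal H_n$ by $\alpha_n(\varrho_a)$ and each effect $M_b$ on $\mathcal K_n$ by $\beta_n^*(M_b)$ (a POVM on $\mathcal K$ since $\beta_n^*$ is unital and positive): then $\mathcal P(\Phi_n(\Gamma),\mathcal G_n)=\mathcal P(\Gamma,\widetilde{\mathcal G}_n)$ for every channel $\Gamma$, so $\widetilde{\mathcal G}_n$ is admissible, its free optimum over $F$ equals that of $\mathcal G_n$ over $F_n$, and the finite-dimensional ratio is reproduced exactly. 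Taking suprema (resp.\ infima) and invoking the identity $\mathcal R^{\mathbb A}_{F,N}=\mathcal R_{F,N}$ (its weight counterpart) together with Eq.~(\ref{eqn:ChanRobsupadv}) closes both identities, the harmless discrepancy between positive- and non-negative-payoff games being absorbed by the usual finite-dimensional perturbation.

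The step I expect to be the main obstacle is exactly the convergence $\Phi_n(D_n)\to D$: the pulled-back finite-dimensional decompositions converge a priori only in the weak topology $\tau$, and weak convergence of channels is not, on its own, stable under composition with the simultaneously varying compressions $\beta_n^*$. What makes it go through is the upgrade, on the convex set of density operators, from weak-$\ast$ to trace-norm convergence, which is precisely the ingredient that controls the term $\mathrm{tr}[D_n(\varrho)(\beta_n^*(M)-M)]$; establishing this carefully, together with the $\tau$-compactness of the channel set that supplies the convergent subnets, is where the genuinely infinite-dimensional work lies — the rest being a faithful transcription of the finite-dimensional conic duality.
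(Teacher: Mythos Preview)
Your approach is essentially the paper's: both reduce everything to $\mathcal R^{\mathbb A}_{F,N}(\Lambda)\geq\mathcal R_{F,N}(\Lambda)$, extract via compactness a convergent (sub)sequence of the finite-dimensional decompositions, and pass to the limit using precisely the upgrade from $\sigma$-weak to trace-norm convergence on density operators (the paper cites Lemma~4.3 of Davies for this) that you correctly single out as the crux. Your organisation---working directly with preimages $D_n\in F$, $E_n\in N$ and showing $\Phi_n(D_n)\to D$---is a slight repackaging of the paper's argument, which first takes limits of the images $\Lambda^F_n,\Lambda^N_n$ inside ${\bf UCP}(\mathcal H,\mathcal K)$ and then separately identifies them with limits of lifted preimages $\Gamma^F_n,\Gamma^N_n$; the substance is the same. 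The game embedding $\varrho_a\mapsto\alpha_n(\varrho_a)$, $M_b\mapsto\beta_n^*(M_b)$ is verbatim the paper's construction.

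There is, however, one genuine gap. You assert that ${\bf Ch}(\mathcal H,\mathcal K)$ is $\tau$-compact (``trivially when [$N$] is everything''). This is false, and it is exactly the infinite-dimensional subtlety the paper is built around: the $\tau$-closure of ${\bf Ch}(\mathcal H,\mathcal K)$ is the strictly larger set ${\bf UCP}(\mathcal H,\mathcal K)$ of unital completely positive Heisenberg maps, including non-normal ones with no Schr\"odinger predual. Consequently, when $N={\bf Ch}(\mathcal H,\mathcal K)$ your subnet only yields $E_n\to E\in{\bf UCP}$, and the step ``likewise $\Phi_n(E_n)\to E$'' cannot be run as written because your splitting argument evaluates $E(\varrho)$. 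The paper handles this case explicitly: one applies the compactness argument only to $F$ (which \emph{is} $\tau$-closed by hypothesis) to secure $D\in F$ with $\Phi_n(D_n)\to D$, and then reads off $E=\big((1+t^\ast)D-\Lambda\big)/t^\ast$ from the limiting affine relation; as a real linear combination of Schr\"odinger channels $E$ is normal, and as a $\tau$-limit of the channels $\Phi_n(E_n)$ it is completely positive and trace-preserving, hence $E\in{\bf Ch}(\mathcal H,\mathcal K)=N$. With this patch your argument is complete and coincides with the paper's.
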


As a straight-forward application of our Theorem \ref{Theorem1}, one can take channels with a trivial, i.e. one-dimensional, input. Such channels correspond to quantum states. In this case, the topology $\tau$ reduces to the well-known $\sigma$-weak topology generated by bounded operators. Hence, any infinite-dimensional state resource that has a $\sigma$-weakly closed set of free states, can be related to performance in quantum games. In finite-dimensional setting, state resources are known to relate to subchannel discrimination (robustness) and exclusion (weight) tasks. As every approximation step is finite-dimensional, the related games in Theorem \ref{Theorem1} are also subchannel discrimination and exclusion tasks.

The conditions of Theorem~\ref{Theorem1} should be checked for each free set separately, which is not our emphasis here. Examples include incoherent and symmetric states in the case of a compact group, see Appendix~\hyperref[app:Proof1]{C}. However, we give another sufficient condition under which the approximated quantifiers satisfy the counterparts of Eq.~(\ref{eqn:ChanRobinfadv}) and Eq.~(\ref{eqn:ChanWeiinfAdv}). In such circumstances, the approximate robustness has again a clear operational interpretation given by outperformance in general (i.e. possibly infinite-dimensional) games, c.f. Eq.~(\ref{eqn:ChanRobsupadv}) where the advantage is restricted to finite games. This is summarized in the following Observation, the proof of which is given in Appendix~\hyperref[app:Proof2]{D}.

\begin{observation}\label{Obs1}
Let $N$ be the free set $F$ or the whole set and the approximation procedure $\mathbb A=\{\alpha_n,\beta_n\}_n$ be such that $F_n\subseteq F$ and $N_n\subseteq N$ for all $n$. Now Eq.~(\ref{eqn:ChanRobinfadv}) and Eq.~(\ref{eqn:ChanWeiinfAdv}) hold when one replaces the robustness $\mc R_{F,N}$ by $\mc R_{F,N}^{\mathbb A}$ and $\mc W_F$ by $\mc W_F^{\mathbb A}$. In this case we do not require $\tau$-closedness of the free set $F$. In particular, the approximate quantifiers are independent of the chosen approximation procedure $\mathbb A$, and they lower bound the robustness $\mc R_{F,N}$ and the weight $\mc W_N$ respectively.
\end{observation}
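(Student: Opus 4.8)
\emph{Proof sketch.}
The plan is to prove the two game identities --- Eq.~(\ref{eqn:ChanRobinfadv}) with $\mc R_{F,N}$ replaced by $\mc R^{\mathbb A}_{F,N}$, and Eq.~(\ref{eqn:ChanWeiinfAdv}) with $\mc W_F$ replaced by $\mc W^{\mathbb A}_F$ --- after which the remaining assertions follow immediately: the right-hand side of neither identity mentions $\mathbb A$, so the approximate quantifiers are automatically independent of the admissible procedure; and the elementary direction $\sup_{\mc G}\mc P(\Lambda,\mc G)/\sup_{\Lambda_F\in F}\mc P(\Lambda_F,\mc G)\le 1+\mc R_{F,N}(\Lambda)$ (decompose $\Lambda$ optimally as in Eq.~(\ref{eqn:Robustness}), use linearity of the payoff, and note that a game with non-negative payoff on a noise set containing $F$ is non-negative on $F$), together with its weight analogue, gives $\mc R^{\mathbb A}_{F,N}\le\mc R_{F,N}$ and $\mc W^{\mathbb A}_F\le\mc W_F$. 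I argue the robustness case; the weight is identical with $N$ taken to be the set of all channels and the inequalities reversed. Throughout, set $\Lambda_n:=\beta_n\circ\Lambda\circ\alpha_n$, write $P_{\mc K_n}$ for the projection onto $\mc K_n$, let $\beta_n^*$ be the Heisenberg map fixed by $\mathrm{tr}[\beta_n(\sigma)M]=\mathrm{tr}[\sigma\,\beta_n^*(M)]$ (unital and positive since $\beta_n$ is a channel), and recall the approximation axioms $\alpha_n\circ\alpha_n=\alpha_n$, $\beta_n\circ\beta_n=\beta_n$, $\alpha_n(\varrho)=\varrho$ for $\varrho$ supported by $\mc H_n$, and $\Lambda_n\to\Lambda$ in $\tau$.

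For the inequality ``$\ge$'' the idea is to dilate finite games. Given a level-$n$ game $\mc G_n=(\{\varrho_a\},\{M_b\},\{\omega_{ab}\})$ on channels $\mc H_n\to\mc K_n$ whose payoff is non-negative on $N_n$, regard the $\varrho_a$ as states on $\hil$ and replace the POVM by $\{\beta_n^*(M_b)\}_b$ together with the zero-scoring outcome $\openone-\beta_n^*(P_{\mc K_n})\ge0$, producing a bona fide infinite-dimensional game $\mc G$. A short computation using $\alpha_n(\varrho_a)=\varrho_a$ and $\mathrm{tr}[\Theta(\varrho)\,\beta_n^*(M)]=\mathrm{tr}[(\beta_n\circ\Theta\circ\alpha_n)(\varrho)\,M]$ shows $\mc P(\Theta,\mc G)=\mc P(\beta_n\circ\Theta\circ\alpha_n,\mc G_n)$ for every channel $\Theta$; hence $\mc P(\Lambda,\mc G)=\mc P(\Lambda_n,\mc G_n)$, $\sup_{\Lambda_F\in F}\mc P(\Lambda_F,\mc G)=\sup_{\Lambda_F\in F_n}\mc P(\Lambda_F,\mc G_n)$ (since $\beta_n\circ\Lambda_F\circ\alpha_n$ runs over $F_n$ as $\Lambda_F$ runs over $F$), and $\mc G$ is valid for $N$ (because $\beta_n\circ\Theta\circ\alpha_n\in N_n$ for $\Theta\in N$, and is always a channel when $N$ is everything). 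So every level-$n$ ratio is attained by an infinite-dimensional game, and taking suprema over $\mc G_n$ and over $n$, Eq.~(\ref{eqn:ChanRobsupadv}) yields ``$\ge$''. This direction does not use $F_n\subseteq F$.

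For the inequality ``$\le$'' the idea is to truncate general games. Fix a valid infinite-dimensional game $\mc G=(\{\varrho_a\},\{M_b\},\{\omega_{ab}\})$ with $\sup_{\Lambda_F\in F}\mc P(\Lambda_F,\mc G)>0$ and set $\widetilde{\mc G}_n:=(\{\alpha_n(\varrho_a)\},\{P_{\mc K_n}\beta_n^*(M_b)P_{\mc K_n}\},\{\omega_{ab}\})$, a game on $\mc H_n\to\mc K_n$ since unitality of $\beta_n^*$ forces the compressed operators to sum to $P_{\mc K_n}$. Using $\alpha_n\circ\alpha_n=\alpha_n$, $\beta_n\circ\beta_n=\beta_n$ one gets $\mc P(\Theta_n,\widetilde{\mc G}_n)=\mc P(\beta_n\circ\Theta_n\circ\alpha_n,\mc G)$ for channels $\Theta_n$ at level $n$, so $\mc P(\Lambda_n,\widetilde{\mc G}_n)=\mc P(\Lambda_n,\mc G)$ and $\mc P(\Lambda_F,\widetilde{\mc G}_n)=\mc P(\Lambda_F,\mc G)$ for $\Lambda_F\in F_n$; consequently $\max_{\overline F_n}\mc P(\cdot,\widetilde{\mc G}_n)=\sup_{F_n}\mc P(\cdot,\mc G)\le\sup_F\mc P(\cdot,\mc G)$ --- the one place $F_n\subseteq F$ is used --- and $N_n\subseteq N$ makes $\widetilde{\mc G}_n$ valid for $\overline N_n$. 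Picking an optimal level-$n$ decomposition $\Lambda_n=(1+t_n)\Lambda_F^{(n)}-t_n\tilde\Lambda^{(n)}$ from Eq.~(\ref{eqn:Robustness}), with $\Lambda_F^{(n)}\in\overline F_n$, $\tilde\Lambda^{(n)}\in\overline N_n$, $t_n=\mc R_{\overline F_n,\overline N_n}(\Lambda_n)\le\mc R^{\mathbb A}_{F,N}(\Lambda)$, evaluating $\mc P(\cdot,\widetilde{\mc G}_n)$ on it and using $\mc P(\tilde\Lambda^{(n)},\widetilde{\mc G}_n)\ge0$ and $0\le\mc P(\Lambda_F^{(n)},\widetilde{\mc G}_n)\le\sup_F\mc P(\cdot,\mc G)$ gives $\mc P(\Lambda_n,\mc G)\le\big(1+\mc R^{\mathbb A}_{F,N}(\Lambda)\big)\sup_F\mc P(\cdot,\mc G)$ for every $n$. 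Passing to the limit --- $\Lambda_n\to\Lambda$ in $\tau$, and $\mc P(\cdot,\mc G)$ is $\tau$-continuous for games with finitely many outcomes, which already exhaust the supremum --- and dividing by the positive denominator yields ``$\le$'' after a supremum over $\mc G$. The weight case runs verbatim with Eq.~(\ref{eqn:Weight}) in place of Eq.~(\ref{eqn:Robustness}), the decomposition $\Lambda_n=\mu_n\Gamma_n+(1-\mu_n)\Lambda_F^{(n)}$, $N$ the set of all channels, and the inequalities reversed.

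The hard part will be the passage to the limit in the truncation step: $\mc P(\cdot,\mc G)$ is only semicontinuous in $\tau$ when the game carries infinitely many outcomes, so one must either restrict to games with finite index sets (harmless, as they already realise the supremum and infimum) or interpose a Fatou-type estimate exploiting $M_b\ge0$. Everything else is bookkeeping with the approximation axioms --- in particular that the transported objects are honest POVMs and that a game's validity (non-negativity of its payoff on the relevant noise set) survives both the dilation and the truncation, which rests on trace preservation and unitality of $\beta_n,\beta_n^*$, on $\alpha_n\circ\alpha_n=\alpha_n$, $\beta_n\circ\beta_n=\beta_n$, and on the fact that a valid game is automatically non-negative on $F$, which removes all remaining sign issues --- together with the finite-dimensional identities Eqs.~(\ref{eqn:ChanRobAdv}), (\ref{eqn:ChanWeiAdv}) and (\ref{eqn:ChanRobsupadv}).
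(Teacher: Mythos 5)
Your argument is correct and covers everything the Observation asserts, but it reaches the key upper bound $\mc P(\Lambda,\mc G)\le\big(1+\mc R^{\mathbb A}_{F,N}(\Lambda)\big)\sup_{\Lambda_F\in F}\mc P(\Lambda_F,\mc G)$ by a genuinely different route. The paper first proves $R_n=\mc R_{\overline{F}_n,\overline{N}_n}(\Lambda_n)=\overline{\mc R}_{F,N}(\Lambda_n)$, the regularized robustness taken w.r.t.\ the $\tau$-closures $\overline{F},\overline{N}$ (the inclusions $F_n\subseteq F$, $N_n\subseteq N$ give one inequality, idempotency of $\alpha_n,\beta_n$ the other), then uses lower semi-continuity of $\overline{\mc R}_{F,N}$ from Lemma~\ref{lemma:genRob} --- which rests on $\tau$-compactness of the closures inside ${\bf UCP}(\hil,\mc K)$ --- together with $\Lambda_n\to\Lambda$ to get $\mc R^{\mathbb A}_{F,N}=\overline{\mc R}_{F,N}$, and finally invokes the general game inequality for $\overline{\mc R}_{F,N}$ established before the proof of Theorem~\ref{Theorem1}. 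You instead truncate an arbitrary valid infinite-dimensional game to each level via $\alpha_n(\varrho_a)$ and $P_{\mc K_n}\beta_n^*(M_b)P_{\mc K_n}$, bound $\mc P(\Lambda_n,\mc G)$ using the level-$n$ optimal decomposition together with $F_n\subseteq F$ and $N_n\subseteq N$, and pass to the limit using only $\tau$-convergence of $\Lambda_n$ and finiteness of the game; this is more elementary and bypasses the compactness/semicontinuity machinery entirely, while your dilation step for the ``$\geq$'' direction coincides with the paper's lifting of the finite optimal game (and, like the paper's, tacitly relies on the level-$n$ Slater condition secured by the faithful-output free channel). The trade-off is that your route does not produce the identification $\mc R^{\mathbb A}_{F,N}=\overline{\mc R}_{F,N}$ itself, which the paper uses immediately after the Observation to conclude that the approximate quantifiers are faithful w.r.t.\ the $\tau$-closure of $F$; for the Observation as literally stated, however, your proof is complete.
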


In the setting of the above Observation, the approximate quantifiers correspond to the extensions of the original quantifiers with respect to the $\tau$-closure of the free set, see Appendix~\hyperref[app:Proof2]{D}. It is natural to consider such extensions, as faithfulness is difficult to establish for resource measures when the free set is not closed. It follows that the approximate quantifiers are faithful with respect to the closure. Especially, they are faithful with respect to $F$ itself, whenever the closure does not introduce any physical states or channels in addition to the original ones in $F$. Here states that do respect the Born rule, and channels that have a representation in the Schrödinger picture are called physical.

Observation~\ref{Obs1} opens up various resources for a game interpretation without having to dive into their topological properties. For example, the free sets associated with the quantum marginal problem, entanglement, incompatibility of channels, and entanglement breaking channels are easily seen to allow an approximation procedure of the desired type, see Appendix~\hyperref[app:Proof2]{D}. Of course, the quantum marginal problem (resp. incompatibility of quantum channels) is defined on tuples of states (resp. channels), which {\it per se} are not within the scope of our results. However, our results can be easily extended to tuples by slight modifications of the proofs and the games, see Appendix~\hyperref[app:Proof2]{D}.

\textit{Connection to relative entropy.---} In the finite-dimensional setting, the generalised robustness measure is known to relate to the max-relative entropy. This connection can be easily extended to the realm of our work. The max-relative entropy is defined by $D_{max}(\varrho\|\sigma) = \log \inf\qty{\lambda \vert \varrho \leq \lambda \sigma}$, for positive trace-class operators $\varrho,\sigma\geq 0$, with $\text{tr}[\varrho]=1$, and $\text{supp}(\varrho)\subseteq\text{supp}(\sigma)$. Let $F$ be the convex and (trace-norm-closed) set of free states, the \emph{max entropy} of $\varrho$ with respect to the set $F$ is defined as $E_{max}(\varrho) = \inf_{\sigma\in F} D_{max}(\varrho\|\sigma)$. We obtain the following observation:
\begin{observation}\label{Obs2}
Let $\varrho$ and $\sigma$ be quantum states, then $E_{max}(\varrho)= \log(1+R_F(\rho))$.
\end{observation}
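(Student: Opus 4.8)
The plan is to establish the identity $E_{max}(\varrho)=\log(1+\mathcal R_F(\varrho))$ by showing that the two infima defining the left- and right-hand sides are over essentially the same set of witnesses, exactly as in the finite-dimensional case, and then checking that nothing breaks in infinite dimensions. Recall that for states the generalized robustness is $\mathcal R_F(\varrho)=\min\{t\geq 0\mid (\varrho+t\tilde\varrho)/(1+t)\in F,\ \tilde\varrho\text{ a state}\}$. Writing $\lambda=1+t$ and $\sigma=(\varrho+t\tilde\varrho)/(1+t)\in F$, the constraint $\tilde\varrho\geq 0$ is equivalent to $\lambda\sigma-\varrho\geq 0$, i.e. $\varrho\leq\lambda\sigma$; and $\tilde\varrho$ automatically has unit trace since $\operatorname{tr}[\lambda\sigma-\varrho]=\lambda-1=t$. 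Hence $1+\mathcal R_F(\varrho)=\inf\{\lambda\geq 1\mid \exists\,\sigma\in F,\ \varrho\leq\lambda\sigma\}$. On the other side, $E_{max}(\varrho)=\inf_{\sigma\in F}D_{max}(\varrho\|\sigma)=\inf_{\sigma\in F}\log\inf\{\lambda\mid\varrho\leq\lambda\sigma\}=\log\inf\{\lambda\mid\exists\,\sigma\in F,\ \varrho\leq\lambda\sigma\}$, where pulling the logarithm and the infimum over $\sigma$ outside is justified by monotonicity and continuity of $\log$. Comparing the two expressions gives the claim, provided the optimal $\lambda$ is automatically $\geq 1$.

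The key steps, in order, are: (i) rewrite $1+\mathcal R_F(\varrho)$ as the infimum of $\lambda$ such that $\varrho\leq\lambda\sigma$ for some $\sigma\in F$, carefully verifying that the trace constraint on the noise state $\tilde\varrho$ is not an extra restriction (it follows from taking traces) and that feasibility forces $\lambda\geq 1$ (again by taking traces of $\varrho\leq\lambda\sigma$); (ii) unfold $E_{max}$ and interchange $\inf_{\sigma\in F}$ with the inner $\inf_\lambda$ to obtain the same expression; (iii) note that the support condition $\operatorname{supp}(\varrho)\subseteq\operatorname{supp}(\sigma)$ appearing in the definition of $D_{max}$ is automatically satisfied whenever $\varrho\leq\lambda\sigma$ holds for some finite $\lambda$, so it imposes nothing extra, and conversely if no such $\sigma\in F$ exists both sides are $+\infty$; (iv) conclude $E_{max}(\varrho)=\log(1+\mathcal R_F(\varrho))$, and combine with Eq.~(\ref{eqn:ChanRobinfadv}) of Theorem~\ref{Theorem1} (applied with $N$ the whole set of states) to read off the operational interpretation whenever $F$ is $\tau$-closed.

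The main obstacle I anticipate is purely operator-theoretic rather than conceptual: one must be sure that the operator inequality $\varrho\leq\lambda\sigma$ is the correct infinite-dimensional surrogate for the noise-mixing constraint, including that the infimum over $\lambda$ is attained (or at least that the infimum is unchanged whether or not it is attained) given that $F$ need not be compact and $\sigma$ may be rank-deficient. Here one uses that $F$ is trace-norm closed and convex, and that for fixed $\sigma$ the set of feasible $\lambda$ is a closed half-line $[\lambda_0,\infty)$ with $\lambda_0=\me^{D_{max}(\varrho\|\sigma)}$; a short argument shows the overall infimum over $\sigma\in F$ of these $\lambda_0$ equals $1+\mathcal R_F(\varrho)$ without needing attainment. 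A secondary point to be handled with a line of care is the degenerate case where $\varrho\in F$: then $\mathcal R_F(\varrho)=0$ and one can take $\sigma=\varrho$, $\lambda=1$, giving $D_{max}(\varrho\|\varrho)=0$, so both sides vanish, consistent with faithfulness established in Theorem~\ref{Theorem1}.
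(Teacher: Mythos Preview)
Your proposal is correct and follows essentially the same route as the paper's own proof: rewrite $1+\mathcal R_F(\varrho)$ as $\inf\{\lambda\geq 1\mid\exists\,\sigma\in F,\ \varrho\leq\lambda\sigma\}$ via the substitution $\lambda=1+t$, unfold $E_{max}$, and match the two infima by pulling the logarithm outside. If anything, your write-up is more careful than the paper's---you correctly justify the interchange by monotonicity of $\log$ (the paper somewhat imprecisely cites concavity) and you explicitly dispose of the support condition and the trace constraint on the noise state, which the paper leaves implicit.
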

\begin{proof}
The proof is similar as in Ref.~\cite{Datta2009b}. First, observe that the robustness can be written as
\begin{align}
R_F(\varrho)&=\inf \{t\geq0|\varrho+t\tau=(1+t)\sigma, \sigma\in F\}\nonumber\\
&=\inf \{t\geq0|\varrho\leq (1+t)\sigma, \sigma\in F\}.
\end{align}
Then one finds that
\begin{align}
E_{max}(\varrho)&=\inf_{\sigma\in F} \log \inf\qty{\lambda\geq 1 \vert \varrho \leq \lambda \sigma}\nonumber\\
&= \log \inf_{\sigma\in F} \inf\qty{\lambda\geq 1 \vert \varrho \leq \lambda \sigma}\nonumber\\
&=\log  \inf\qty{\lambda\geq 1 \vert \varrho \leq \lambda \sigma, \sigma\in F}\nonumber\\
&=\log(1+R_F(\varrho)),
\end{align}
where the fist equality is due to the concavity of the logarithm and the last equality is obtained by $\lambda\mapsto 1+t$.
\end{proof}

By recalling Theorem~\ref{Theorem1}, we see that the max-relative entropy has an interpretation through quantum games, whenever the free set of states is $\tau$-closed.

\textit{Conclusions.---} We have presented a method for extending finite-dimensional quantum resource quantifiers into the infinite-dimensional regime. We have demonstrated our technique using the generalized robustness, the free robustness, and the convex weight. In the case of quantum states (resp. quantum channels) these quantifiers relate to performance in discrimination tasks (resp. quantum games) in the finite-dimensional setting. We have identified sufficient conditions on the free sets, under which the approximations give the performance interpretation for the established infinite-dimensional resource quantifiers, and for the natural extensions thereof. We have presented various examples of quantum state and channel resources that fall under these sufficient conditions. Moreover, we show that the connection between max-relative entropy and operational tasks established in~\cite{Bu2017} carries over to the infinite-dimensional setting.

For future research, it will be interesting to find further conditions under which the free sets allow faithful quantifiers with the game interpretation. Moreover, an open question is to identify free sets where the the approximate robustness fails to coincide with the original one, i.e. where there is a finite gap between the two. Furthermore, our work paves way to the resource theory of more general dynamical objects, such as quantum instruments, and more specialised quantifiers therein, such as tolerance against specific noise sets, in continuous variable systems.

\textit{Acknowledgements.---}
E.H. acknowledges financial support from the National Natural Science Foundation of China (Grant No. 11875110). T.K. acknowledges support from the ERC (Consolidator Grant No.~683107/TempoQ) and  
the Deutsche Forschungsgemeinschaft (DFG, German Research Foundation - 447948357). R.U. acknowledges financial support from the Swiss National Science Foundation (Starting grant DIAQ and NCCR QSIT).

\textit{Note added.---}
Recently, two articles by Lami, Regula, Takagi, and Ferrari~\cite{regula20,lami20} presented related results on quantum state resources. In their work, they presented a sufficient condition for the closing of the gap between their counterpart of our approximate robustness and the original robustness measure. In our setting that condition is as follows: We define the topology $\tau_0$ on the set of non-normalized channels, i.e.,\ the space $\mc V$ of completely bounded linear maps as the coarsest topology w.r.t.\ which the maps $\mc V\ni\Lambda\mapsto\tr{\Lambda(\varrho)K}\in\C$ are continuous for all input states $\varrho$ and compact operators on the output space $\mc K$. The condition is that the cone corresponding to $F$ (i.e.,\ the cone whose intersection with the set of channels coincides with $F$) is $\tau_0$-closed. (Use of cones here is motivated by noting that $F$ itself is typically not $\tau_0$-closed as the trace-preservation condition for channels is problematic in this topology.) As $\tau_0$ is coarser than our topology $\tau$, this cone is also $\tau$-closed and so is $F$ as the intersection of this cone with the $\tau$-closed set of completely positive unital linear maps in the Heisenberg picture (possibly without a Schr\"odinger description). Thus this condition implies our condition in Theorem \ref{Theorem1}. As an example, the authors proved that the cone of separable states is closed in the reduction of $\tau_0$-topology to the case of quantum states, i.e. in the $\sigma$-weak topology generated by the compact operators. This shows, that the set of separable states is $\tau$-closed and, consequently, falls into the realm of our Theorem~\ref{Theorem1}.

\bibliography{bibliography.bib}

\begin{thebibliography}{35}%
\makeatletter
\providecommand \@ifxundefined [1]{%
 \@ifx{#1\undefined}
}%
\providecommand \@ifnum [1]{%
 \ifnum #1\expandafter \@firstoftwo
 \else \expandafter \@secondoftwo
 \fi
}%
\providecommand \@ifx [1]{%
 \ifx #1\expandafter \@firstoftwo
 \else \expandafter \@secondoftwo
 \fi
}%
\providecommand \natexlab [1]{#1}%
\providecommand \enquote  [1]{``#1''}%
\providecommand \bibnamefont  [1]{#1}%
\providecommand \bibfnamefont [1]{#1}%
\providecommand \citenamefont [1]{#1}%
\providecommand \href@noop [0]{\@secondoftwo}%
\providecommand \href [0]{\begingroup \@sanitize@url \@href}%
\providecommand \@href[1]{\@@startlink{#1}\@@href}%
\providecommand \@@href[1]{\endgroup#1\@@endlink}%
\providecommand \@sanitize@url [0]{\catcode `\\12\catcode `\$12\catcode
  `\&12\catcode `\#12\catcode `\^12\catcode `\_12\catcode `\%12\relax}%
\providecommand \@@startlink[1]{}%
\providecommand \@@endlink[0]{}%
\providecommand \url  [0]{\begingroup\@sanitize@url \@url }%
\providecommand \@url [1]{\endgroup\@href {#1}{\urlprefix }}%
\providecommand \urlprefix  [0]{URL }%
\providecommand \Eprint [0]{\href }%
\providecommand \doibase [0]{http://dx.doi.org/}%
\providecommand \selectlanguage [0]{\@gobble}%
\providecommand \bibinfo  [0]{\@secondoftwo}%
\providecommand \bibfield  [0]{\@secondoftwo}%
\providecommand \translation [1]{[#1]}%
\providecommand \BibitemOpen [0]{}%
\providecommand \bibitemStop [0]{}%
\providecommand \bibitemNoStop [0]{.\EOS\space}%
\providecommand \EOS [0]{\spacefactor3000\relax}%
\providecommand \BibitemShut  [1]{\csname bibitem#1\endcsname}%
\let\auto@bib@innerbib\@empty
\bibitem [{\citenamefont {Chitambar}\ and\ \citenamefont
  {Gour}(2019)}]{ReviewQRT}%
  \BibitemOpen
  \bibfield  {author} {\bibinfo {author} {\bibfnamefont {E.}~\bibnamefont
  {Chitambar}}\ and\ \bibinfo {author} {\bibfnamefont {G.}~\bibnamefont
  {Gour}},\ }\href {\doibase 10.1103/RevModPhys.91.025001} {\bibfield
  {journal} {\bibinfo  {journal} {Rev. Mod. Phys.}\ }\textbf {\bibinfo {volume}
  {91}},\ \bibinfo {pages} {025001} (\bibinfo {year} {2019})}\BibitemShut
  {NoStop}%
\bibitem [{\citenamefont {Datta}(2009{\natexlab{a}})}]{Datta2009a}%
  \BibitemOpen
  \bibfield  {author} {\bibinfo {author} {\bibfnamefont {N.}~\bibnamefont
  {Datta}},\ }\href {\doibase 10.1109/tit.2009.2018325} {\bibfield  {journal}
  {\bibinfo  {journal} {IEEE Trans. Inf. Theory}\ }\textbf {\bibinfo {volume}
  {55}},\ \bibinfo {pages} {2816–2826} (\bibinfo {year}
  {2009}{\natexlab{a}})}\BibitemShut {NoStop}%
\bibitem [{\citenamefont {Datta}(2009{\natexlab{b}})}]{Datta2009b}%
  \BibitemOpen
  \bibfield  {author} {\bibinfo {author} {\bibfnamefont {N.}~\bibnamefont
  {Datta}},\ }\href {\doibase 10.1142/S0219749909005298} {\bibfield  {journal}
  {\bibinfo  {journal} {Int. J. Quantum Inf.}\ }\textbf {\bibinfo {volume}
  {07}},\ \bibinfo {pages} {475} (\bibinfo {year}
  {2009}{\natexlab{b}})}\BibitemShut {NoStop}%
\bibitem [{\citenamefont {Napoli}\ \emph {et~al.}(2016)\citenamefont {Napoli},
  \citenamefont {Bromley}, \citenamefont {Cianciaruso}, \citenamefont {Piani},
  \citenamefont {Johnston},\ and\ \citenamefont {Adesso}}]{NBC+16}%
  \BibitemOpen
  \bibfield  {author} {\bibinfo {author} {\bibfnamefont {C.}~\bibnamefont
  {Napoli}}, \bibinfo {author} {\bibfnamefont {T.~R.}\ \bibnamefont {Bromley}},
  \bibinfo {author} {\bibfnamefont {M.}~\bibnamefont {Cianciaruso}}, \bibinfo
  {author} {\bibfnamefont {M.}~\bibnamefont {Piani}}, \bibinfo {author}
  {\bibfnamefont {N.}~\bibnamefont {Johnston}}, \ and\ \bibinfo {author}
  {\bibfnamefont {G.}~\bibnamefont {Adesso}},\ }\href {\doibase
  10.1103/PhysRevLett.116.150502} {\bibfield  {journal} {\bibinfo  {journal}
  {Phys. Rev. Lett.}\ }\textbf {\bibinfo {volume} {116}},\ \bibinfo {pages}
  {150502} (\bibinfo {year} {2016})}\BibitemShut {NoStop}%
\bibitem [{\citenamefont {Skrzypczyk}\ and\ \citenamefont
  {Linden}(2019)}]{SL19}%
  \BibitemOpen
  \bibfield  {author} {\bibinfo {author} {\bibfnamefont {P.}~\bibnamefont
  {Skrzypczyk}}\ and\ \bibinfo {author} {\bibfnamefont {N.}~\bibnamefont
  {Linden}},\ }\href {\doibase 10.1103/physrevlett.122.140403} {\bibfield
  {journal} {\bibinfo  {journal} {Physical Review Letters}\ }\textbf {\bibinfo
  {volume} {122}} (\bibinfo {year} {2019}),\
  10.1103/physrevlett.122.140403}\BibitemShut {NoStop}%
\bibitem [{\citenamefont {Piani}\ and\ \citenamefont
  {Watrous}(2015)}]{Piani15}%
  \BibitemOpen
  \bibfield  {author} {\bibinfo {author} {\bibfnamefont {M.}~\bibnamefont
  {Piani}}\ and\ \bibinfo {author} {\bibfnamefont {J.}~\bibnamefont
  {Watrous}},\ }\href {\doibase 10.1103/PhysRevLett.114.060404} {\bibfield
  {journal} {\bibinfo  {journal} {Phys. Rev. Lett.}\ }\textbf {\bibinfo
  {volume} {114}},\ \bibinfo {pages} {060404} (\bibinfo {year}
  {2015})}\BibitemShut {NoStop}%
\bibitem [{\citenamefont {Takagi}\ \emph {et~al.}(2019)\citenamefont {Takagi},
  \citenamefont {Regula}, \citenamefont {Bu}, \citenamefont {Liu},\ and\
  \citenamefont {Adesso}}]{Takagi19}%
  \BibitemOpen
  \bibfield  {author} {\bibinfo {author} {\bibfnamefont {R.}~\bibnamefont
  {Takagi}}, \bibinfo {author} {\bibfnamefont {B.}~\bibnamefont {Regula}},
  \bibinfo {author} {\bibfnamefont {K.}~\bibnamefont {Bu}}, \bibinfo {author}
  {\bibfnamefont {Z.-W.}\ \bibnamefont {Liu}}, \ and\ \bibinfo {author}
  {\bibfnamefont {G.}~\bibnamefont {Adesso}},\ }\href {\doibase
  10.1103/physrevlett.122.140402} {\bibfield  {journal} {\bibinfo  {journal}
  {Phys. Rev. Lett.}\ }\textbf {\bibinfo {volume} {122}},\ \bibinfo {pages}
  {140402} (\bibinfo {year} {2019})}\BibitemShut {NoStop}%
\bibitem [{\citenamefont {Bae}\ \emph {et~al.}(2019)\citenamefont {Bae},
  \citenamefont {Chru\ifmmode \acute{s}\else
  \'{s}\fi{}ci\ifmmode~\acute{n}\else \'{n}\fi{}ski},\ and\ \citenamefont
  {Piani}}]{BCP19}%
  \BibitemOpen
  \bibfield  {author} {\bibinfo {author} {\bibfnamefont {J.}~\bibnamefont
  {Bae}}, \bibinfo {author} {\bibfnamefont {D.}~\bibnamefont {Chru\ifmmode
  \acute{s}\else \'{s}\fi{}ci\ifmmode~\acute{n}\else \'{n}\fi{}ski}}, \ and\
  \bibinfo {author} {\bibfnamefont {M.}~\bibnamefont {Piani}},\ }\href
  {\doibase 10.1103/PhysRevLett.122.140404} {\bibfield  {journal} {\bibinfo
  {journal} {Phys. Rev. Lett.}\ }\textbf {\bibinfo {volume} {122}},\ \bibinfo
  {pages} {140404} (\bibinfo {year} {2019})}\BibitemShut {NoStop}%
\bibitem [{\citenamefont {Takagi}\ and\ \citenamefont
  {Regula}(2019)}]{TakagiRegula19}%
  \BibitemOpen
  \bibfield  {author} {\bibinfo {author} {\bibfnamefont {R.}~\bibnamefont
  {Takagi}}\ and\ \bibinfo {author} {\bibfnamefont {B.}~\bibnamefont
  {Regula}},\ }\href {\doibase 10.1103/PhysRevX.9.031053} {\bibfield  {journal}
  {\bibinfo  {journal} {Phys. Rev. X}\ }\textbf {\bibinfo {volume} {9}},\
  \bibinfo {pages} {031053} (\bibinfo {year} {2019})}\BibitemShut {NoStop}%
\bibitem [{\citenamefont {Lipka-Bartosik}\ and\ \citenamefont
  {Skrzypczyk}(2019)}]{LPS19}%
  \BibitemOpen
  \bibfield  {author} {\bibinfo {author} {\bibfnamefont {P.}~\bibnamefont
  {Lipka-Bartosik}}\ and\ \bibinfo {author} {\bibfnamefont {P.}~\bibnamefont
  {Skrzypczyk}},\ }\href {\doibase 10.1103/PhysRevResearch.2.023029} {\enquote
  {\bibinfo {title} {{T}he operational advantages provided by non-classical
  teleportation},}\ }\bibinfo {howpublished} {Phys. Rev. Research 2, 023029
  (2020)} (\bibinfo {year} {2019}),\ \bibinfo {note} {arXiv:1908.05107v2},\
  \Eprint {http://arxiv.org/abs/1908.05107} {1908.05107} \BibitemShut {NoStop}%
\bibitem [{\citenamefont {Ducuara}\ \emph {et~al.}(2020)\citenamefont
  {Ducuara}, \citenamefont {Lipka-Bartosik},\ and\ \citenamefont
  {Skrzypczyk}}]{DLPS20}%
  \BibitemOpen
  \bibfield  {author} {\bibinfo {author} {\bibfnamefont {A.~F.}\ \bibnamefont
  {Ducuara}}, \bibinfo {author} {\bibfnamefont {P.}~\bibnamefont
  {Lipka-Bartosik}}, \ and\ \bibinfo {author} {\bibfnamefont {P.}~\bibnamefont
  {Skrzypczyk}},\ }\href {\doibase 10.1103/physrevresearch.2.033374} {\bibfield
   {journal} {\bibinfo  {journal} {Physical Review Research}\ }\textbf
  {\bibinfo {volume} {2}} (\bibinfo {year} {2020}),\
  10.1103/physrevresearch.2.033374}\BibitemShut {NoStop}%
\bibitem [{\citenamefont {Uola}\ \emph
  {et~al.}(2020{\natexlab{a}})\citenamefont {Uola}, \citenamefont {Kraft},\
  and\ \citenamefont {Abbott}}]{Uola20}%
  \BibitemOpen
  \bibfield  {author} {\bibinfo {author} {\bibfnamefont {R.}~\bibnamefont
  {Uola}}, \bibinfo {author} {\bibfnamefont {T.}~\bibnamefont {Kraft}}, \ and\
  \bibinfo {author} {\bibfnamefont {A.~A.}\ \bibnamefont {Abbott}},\ }\href
  {\doibase 10.1103/PhysRevA.101.052306} {\bibfield  {journal} {\bibinfo
  {journal} {Phys. Rev. A}\ }\textbf {\bibinfo {volume} {101}},\ \bibinfo
  {pages} {052306} (\bibinfo {year} {2020}{\natexlab{a}})}\BibitemShut
  {NoStop}%
\bibitem [{\citenamefont {Lewenstein}\ and\ \citenamefont
  {Sanpera}(1998)}]{Lewenstein1998}%
  \BibitemOpen
  \bibfield  {author} {\bibinfo {author} {\bibfnamefont {M.}~\bibnamefont
  {Lewenstein}}\ and\ \bibinfo {author} {\bibfnamefont {A.}~\bibnamefont
  {Sanpera}},\ }\href {\doibase 10.1103/PhysRevLett.80.2261} {\bibfield
  {journal} {\bibinfo  {journal} {Phys. Rev. Lett.}\ }\textbf {\bibinfo
  {volume} {80}},\ \bibinfo {pages} {2261} (\bibinfo {year}
  {1998})}\BibitemShut {NoStop}%
\bibitem [{\citenamefont {Ducuara}\ and\ \citenamefont
  {Skrzypczyk}(2020)}]{Ducuara20}%
  \BibitemOpen
  \bibfield  {author} {\bibinfo {author} {\bibfnamefont {A.~F.}\ \bibnamefont
  {Ducuara}}\ and\ \bibinfo {author} {\bibfnamefont {P.}~\bibnamefont
  {Skrzypczyk}},\ }\href {\doibase 10.1103/PhysRevLett.125.110401} {\bibfield
  {journal} {\bibinfo  {journal} {Phys. Rev. Lett.}\ }\textbf {\bibinfo
  {volume} {125}},\ \bibinfo {pages} {110401} (\bibinfo {year}
  {2020})}\BibitemShut {NoStop}%
\bibitem [{\citenamefont {Uola}\ \emph
  {et~al.}(2020{\natexlab{b}})\citenamefont {Uola}, \citenamefont {Bullock},
  \citenamefont {Kraft}, \citenamefont {Pellonp\"a\"a},\ and\ \citenamefont
  {Brunner}}]{UolaBullock20}%
  \BibitemOpen
  \bibfield  {author} {\bibinfo {author} {\bibfnamefont {R.}~\bibnamefont
  {Uola}}, \bibinfo {author} {\bibfnamefont {T.}~\bibnamefont {Bullock}},
  \bibinfo {author} {\bibfnamefont {T.}~\bibnamefont {Kraft}}, \bibinfo
  {author} {\bibfnamefont {J.-P.}\ \bibnamefont {Pellonp\"a\"a}}, \ and\
  \bibinfo {author} {\bibfnamefont {N.}~\bibnamefont {Brunner}},\ }\href
  {\doibase 10.1103/PhysRevLett.125.110402} {\bibfield  {journal} {\bibinfo
  {journal} {Phys. Rev. Lett.}\ }\textbf {\bibinfo {volume} {125}},\ \bibinfo
  {pages} {110402} (\bibinfo {year} {2020}{\natexlab{b}})}\BibitemShut
  {NoStop}%
\bibitem [{\citenamefont {Uola}\ \emph {et~al.}(2015)\citenamefont {Uola},
  \citenamefont {Budroni}, \citenamefont {Gühne},\ and\ \citenamefont
  {Pellonpää}}]{Uola15}%
  \BibitemOpen
  \bibfield  {author} {\bibinfo {author} {\bibfnamefont {R.}~\bibnamefont
  {Uola}}, \bibinfo {author} {\bibfnamefont {C.}~\bibnamefont {Budroni}},
  \bibinfo {author} {\bibfnamefont {O.}~\bibnamefont {Gühne}}, \ and\ \bibinfo
  {author} {\bibfnamefont {J.-P.}\ \bibnamefont {Pellonpää}},\ }\href
  {\doibase 10.1103/physrevlett.115.230402} {\bibfield  {journal} {\bibinfo
  {journal} {Phys. Rev. Lett.}\ }\textbf {\bibinfo {volume} {115}} (\bibinfo
  {year} {2015}),\ 10.1103/physrevlett.115.230402}\BibitemShut {NoStop}%
\bibitem [{\citenamefont {Carmeli}\ \emph {et~al.}(2019)\citenamefont
  {Carmeli}, \citenamefont {Heinosaari},\ and\ \citenamefont
  {Toigo}}]{Carmeli19}%
  \BibitemOpen
  \bibfield  {author} {\bibinfo {author} {\bibfnamefont {C.}~\bibnamefont
  {Carmeli}}, \bibinfo {author} {\bibfnamefont {T.}~\bibnamefont {Heinosaari}},
  \ and\ \bibinfo {author} {\bibfnamefont {A.}~\bibnamefont {Toigo}},\ }\href
  {\doibase 10.1103/physrevlett.122.130402} {\bibfield  {journal} {\bibinfo
  {journal} {Phys. Rev. Lett.}\ }\textbf {\bibinfo {volume} {122}} (\bibinfo
  {year} {2019}),\ 10.1103/physrevlett.122.130402}\BibitemShut {NoStop}%
\bibitem [{\citenamefont {Skrzypczyk}\ \emph {et~al.}(2019)\citenamefont
  {Skrzypczyk}, \citenamefont {Šupić},\ and\ \citenamefont
  {Cavalcanti}}]{Skrzypczyk19}%
  \BibitemOpen
  \bibfield  {author} {\bibinfo {author} {\bibfnamefont {P.}~\bibnamefont
  {Skrzypczyk}}, \bibinfo {author} {\bibfnamefont {I.}~\bibnamefont {Šupić}},
  \ and\ \bibinfo {author} {\bibfnamefont {D.}~\bibnamefont {Cavalcanti}},\
  }\href {\doibase 10.1103/physrevlett.122.130403} {\bibfield  {journal}
  {\bibinfo  {journal} {Phys. Rev. Lett.}\ }\textbf {\bibinfo {volume} {122}}
  (\bibinfo {year} {2019}),\ 10.1103/physrevlett.122.130403}\BibitemShut
  {NoStop}%
\bibitem [{\citenamefont {Oszmaniec}\ and\ \citenamefont
  {Biswas}(2019)}]{Oszmaniec19}%
  \BibitemOpen
  \bibfield  {author} {\bibinfo {author} {\bibfnamefont {M.}~\bibnamefont
  {Oszmaniec}}\ and\ \bibinfo {author} {\bibfnamefont {T.}~\bibnamefont
  {Biswas}},\ }\href {\doibase 10.22331/q-2019-04-26-133} {\bibfield  {journal}
  {\bibinfo  {journal} {Quantum}\ }\textbf {\bibinfo {volume} {3}},\ \bibinfo
  {pages} {133} (\bibinfo {year} {2019})}\BibitemShut {NoStop}%
\bibitem [{\citenamefont {Buscemi}\ \emph {et~al.}(2020)\citenamefont
  {Buscemi}, \citenamefont {Chitambar},\ and\ \citenamefont
  {Zhou}}]{Buscemi20}%
  \BibitemOpen
  \bibfield  {author} {\bibinfo {author} {\bibfnamefont {F.}~\bibnamefont
  {Buscemi}}, \bibinfo {author} {\bibfnamefont {E.}~\bibnamefont {Chitambar}},
  \ and\ \bibinfo {author} {\bibfnamefont {W.}~\bibnamefont {Zhou}},\ }\href
  {\doibase 10.1103/physrevlett.124.120401} {\bibfield  {journal} {\bibinfo
  {journal} {Phys. Rev. Lett.}\ }\textbf {\bibinfo {volume} {124}} (\bibinfo
  {year} {2020}),\ 10.1103/physrevlett.124.120401}\BibitemShut {NoStop}%
\bibitem [{\citenamefont {Uola}\ \emph {et~al.}(2019)\citenamefont {Uola},
  \citenamefont {Kraft}, \citenamefont {Shang}, \citenamefont {Yu},\ and\
  \citenamefont {G{\"u}hne}}]{Uola19}%
  \BibitemOpen
  \bibfield  {author} {\bibinfo {author} {\bibfnamefont {R.}~\bibnamefont
  {Uola}}, \bibinfo {author} {\bibfnamefont {T.}~\bibnamefont {Kraft}},
  \bibinfo {author} {\bibfnamefont {J.}~\bibnamefont {Shang}}, \bibinfo
  {author} {\bibfnamefont {X.-D.}\ \bibnamefont {Yu}}, \ and\ \bibinfo {author}
  {\bibfnamefont {O.}~\bibnamefont {G{\"u}hne}},\ }\href {\doibase
  10.1103/PhysRevLett.122.130404} {\bibfield  {journal} {\bibinfo  {journal}
  {Phys. Rev. Lett.}\ }\textbf {\bibinfo {volume} {122}},\ \bibinfo {pages}
  {130404} (\bibinfo {year} {2019})}\BibitemShut {NoStop}%
\bibitem [{\citenamefont {Kuramochi}(2020)}]{Kuramochi20}%
  \BibitemOpen
  \bibfield  {author} {\bibinfo {author} {\bibfnamefont {Y.}~\bibnamefont
  {Kuramochi}},\ }\href@noop {} {\  (\bibinfo {year} {2020})},\ \Eprint
  {http://arxiv.org/abs/2002.03504} {arXiv:2002.03504 [math.FA]} \BibitemShut
  {NoStop}%
\bibitem [{\citenamefont {Rosset}\ \emph {et~al.}(2018)\citenamefont {Rosset},
  \citenamefont {Buscemi},\ and\ \citenamefont {Liang}}]{Rosset2018}%
  \BibitemOpen
  \bibfield  {author} {\bibinfo {author} {\bibfnamefont {D.}~\bibnamefont
  {Rosset}}, \bibinfo {author} {\bibfnamefont {F.}~\bibnamefont {Buscemi}}, \
  and\ \bibinfo {author} {\bibfnamefont {Y.-C.}\ \bibnamefont {Liang}},\ }\href
  {\doibase 10.1103/PhysRevX.8.021033} {\bibfield  {journal} {\bibinfo
  {journal} {Phys. Rev. X}\ }\textbf {\bibinfo {volume} {8}},\ \bibinfo {pages}
  {021033} (\bibinfo {year} {2018})}\BibitemShut {NoStop}%
\bibitem [{\citenamefont {Bu}\ \emph {et~al.}(2017)\citenamefont {Bu},
  \citenamefont {Singh}, \citenamefont {Fei}, \citenamefont {Pati},\ and\
  \citenamefont {Wu}}]{Bu2017}%
  \BibitemOpen
  \bibfield  {author} {\bibinfo {author} {\bibfnamefont {K.}~\bibnamefont
  {Bu}}, \bibinfo {author} {\bibfnamefont {U.}~\bibnamefont {Singh}}, \bibinfo
  {author} {\bibfnamefont {S.-M.}\ \bibnamefont {Fei}}, \bibinfo {author}
  {\bibfnamefont {A.~K.}\ \bibnamefont {Pati}}, \ and\ \bibinfo {author}
  {\bibfnamefont {J.}~\bibnamefont {Wu}},\ }\href {\doibase
  10.1103/PhysRevLett.119.150405} {\bibfield  {journal} {\bibinfo  {journal}
  {Phys. Rev. Lett.}\ }\textbf {\bibinfo {volume} {119}},\ \bibinfo {pages}
  {150405} (\bibinfo {year} {2017})}\BibitemShut {NoStop}%
\bibitem [{\citenamefont {Regula}\ \emph {et~al.}(2020)\citenamefont {Regula},
  \citenamefont {Lami}, \citenamefont {Ferrari},\ and\ \citenamefont
  {Takagi}}]{regula20}%
  \BibitemOpen
  \bibfield  {author} {\bibinfo {author} {\bibfnamefont {B.}~\bibnamefont
  {Regula}}, \bibinfo {author} {\bibfnamefont {L.}~\bibnamefont {Lami}},
  \bibinfo {author} {\bibfnamefont {G.}~\bibnamefont {Ferrari}}, \ and\
  \bibinfo {author} {\bibfnamefont {R.}~\bibnamefont {Takagi}},\ }\href@noop {}
  {\enquote {\bibinfo {title} {Operational quantification of
  continuous-variable quantum resources},}\ } (\bibinfo {year} {2020}),\
  \Eprint {http://arxiv.org/abs/2009.11302} {arXiv:2009.11302 [quant-ph]}
  \BibitemShut {NoStop}%
\bibitem [{\citenamefont {Lami}\ \emph {et~al.}(2020)\citenamefont {Lami},
  \citenamefont {Regula}, \citenamefont {Takagi},\ and\ \citenamefont
  {Ferrari}}]{lami20}%
  \BibitemOpen
  \bibfield  {author} {\bibinfo {author} {\bibfnamefont {L.}~\bibnamefont
  {Lami}}, \bibinfo {author} {\bibfnamefont {B.}~\bibnamefont {Regula}},
  \bibinfo {author} {\bibfnamefont {R.}~\bibnamefont {Takagi}}, \ and\ \bibinfo
  {author} {\bibfnamefont {G.}~\bibnamefont {Ferrari}},\ }\href@noop {}
  {\enquote {\bibinfo {title} {Taming the infinite: framework for resource
  quantification in infinite-dimensional general probabilistic theories},}\ }
  (\bibinfo {year} {2020}),\ \Eprint {http://arxiv.org/abs/2009.11313}
  {arXiv:2009.11313 [quant-ph]} \BibitemShut {NoStop}%
\bibitem [{\citenamefont {Haapasalo}(2015)}]{Haapasalo2015}%
  \BibitemOpen
  \bibfield  {author} {\bibinfo {author} {\bibfnamefont {E.}~\bibnamefont
  {Haapasalo}},\ }\href {\doibase 10.1088/1751-8113/48/25/255303} {\bibfield
  {journal} {\bibinfo  {journal} {J. Phys. A}\ }\textbf {\bibinfo {volume}
  {48}},\ \bibinfo {pages} {255303} (\bibinfo {year} {2015})}\BibitemShut
  {NoStop}%
\bibitem [{\citenamefont {Davies}(1969)}]{davies1969}%
  \BibitemOpen
  \bibfield  {author} {\bibinfo {author} {\bibfnamefont {E.~B.}\ \bibnamefont
  {Davies}},\ }\href {https://projecteuclid.org:443/euclid.cmp/1103841988}
  {\bibfield  {journal} {\bibinfo  {journal} {Comm. Math. Phys.}\ }\textbf
  {\bibinfo {volume} {15}},\ \bibinfo {pages} {277} (\bibinfo {year}
  {1969})}\BibitemShut {NoStop}%
\bibitem [{\citenamefont {Streltsov}\ \emph {et~al.}(2017)\citenamefont
  {Streltsov}, \citenamefont {Adesso},\ and\ \citenamefont
  {Plenio}}]{Streltsov2017}%
  \BibitemOpen
  \bibfield  {author} {\bibinfo {author} {\bibfnamefont {A.}~\bibnamefont
  {Streltsov}}, \bibinfo {author} {\bibfnamefont {G.}~\bibnamefont {Adesso}}, \
  and\ \bibinfo {author} {\bibfnamefont {M.~B.}\ \bibnamefont {Plenio}},\
  }\href {\doibase 10.1103/RevModPhys.89.041003} {\bibfield  {journal}
  {\bibinfo  {journal} {Rev. Mod. Phys.}\ }\textbf {\bibinfo {volume} {89}},\
  \bibinfo {pages} {041003} (\bibinfo {year} {2017})}\BibitemShut {NoStop}%
\bibitem [{\citenamefont {Piani}\ \emph {et~al.}(2016)\citenamefont {Piani},
  \citenamefont {Cianciaruso}, \citenamefont {Bromley}, \citenamefont {Napoli},
  \citenamefont {Johnston},\ and\ \citenamefont {Adesso}}]{Piani2016}%
  \BibitemOpen
  \bibfield  {author} {\bibinfo {author} {\bibfnamefont {M.}~\bibnamefont
  {Piani}}, \bibinfo {author} {\bibfnamefont {M.}~\bibnamefont {Cianciaruso}},
  \bibinfo {author} {\bibfnamefont {T.~R.}\ \bibnamefont {Bromley}}, \bibinfo
  {author} {\bibfnamefont {C.}~\bibnamefont {Napoli}}, \bibinfo {author}
  {\bibfnamefont {N.}~\bibnamefont {Johnston}}, \ and\ \bibinfo {author}
  {\bibfnamefont {G.}~\bibnamefont {Adesso}},\ }\href {\doibase
  10.1103/PhysRevA.93.042107} {\bibfield  {journal} {\bibinfo  {journal} {Phys.
  Rev. A}\ }\textbf {\bibinfo {volume} {93}},\ \bibinfo {pages} {042107}
  (\bibinfo {year} {2016})}\BibitemShut {NoStop}%
\bibitem [{\citenamefont {Holevo}\ \emph {et~al.}(2005)\citenamefont {Holevo},
  \citenamefont {Shirokov},\ and\ \citenamefont {Werner}}]{HoShiWe2005}%
  \BibitemOpen
  \bibfield  {author} {\bibinfo {author} {\bibfnamefont {A.}~\bibnamefont
  {Holevo}}, \bibinfo {author} {\bibfnamefont {M.}~\bibnamefont {Shirokov}}, \
  and\ \bibinfo {author} {\bibfnamefont {R.}~\bibnamefont {Werner}},\
  }\href@noop {} {\bibfield  {journal} {\bibinfo  {journal} {Russ. Math.
  Surv.}\ }\textbf {\bibinfo {volume} {60}} (\bibinfo {year}
  {2005})}\BibitemShut {NoStop}%
\bibitem [{\citenamefont {Horodecki}\ \emph {et~al.}(2009)\citenamefont
  {Horodecki}, \citenamefont {Horodecki}, \citenamefont {Horodecki},\ and\
  \citenamefont {Horodecki}}]{Horodecki2009}%
  \BibitemOpen
  \bibfield  {author} {\bibinfo {author} {\bibfnamefont {R.}~\bibnamefont
  {Horodecki}}, \bibinfo {author} {\bibfnamefont {P.}~\bibnamefont
  {Horodecki}}, \bibinfo {author} {\bibfnamefont {M.}~\bibnamefont
  {Horodecki}}, \ and\ \bibinfo {author} {\bibfnamefont {K.}~\bibnamefont
  {Horodecki}},\ }\href {\doibase 10.1103/RevModPhys.81.865} {\bibfield
  {journal} {\bibinfo  {journal} {Rev. Mod. Phys.}\ }\textbf {\bibinfo {volume}
  {81}},\ \bibinfo {pages} {865} (\bibinfo {year} {2009})}\BibitemShut
  {NoStop}%
\bibitem [{\citenamefont {Gühne}\ and\ \citenamefont
  {Tóth}(2009)}]{Guehne2009}%
  \BibitemOpen
  \bibfield  {author} {\bibinfo {author} {\bibfnamefont {O.}~\bibnamefont
  {Gühne}}\ and\ \bibinfo {author} {\bibfnamefont {G.}~\bibnamefont {Tóth}},\
  }\href {\doibase https://doi.org/10.1016/j.physrep.2009.02.004} {\bibfield
  {journal} {\bibinfo  {journal} {Phys. Rep.}\ }\textbf {\bibinfo {volume}
  {474}},\ \bibinfo {pages} {1 } (\bibinfo {year} {2009})}\BibitemShut
  {NoStop}%
\bibitem [{\citenamefont {Heinosaari}\ and\ \citenamefont
  {Miyadera}(2017)}]{Heinosaari2017}%
  \BibitemOpen
  \bibfield  {author} {\bibinfo {author} {\bibfnamefont {T.}~\bibnamefont
  {Heinosaari}}\ and\ \bibinfo {author} {\bibfnamefont {T.}~\bibnamefont
  {Miyadera}},\ }\href {\doibase 10.1088/1751-8121/aa5f6b} {\bibfield
  {journal} {\bibinfo  {journal} {Journal of Physics A: Mathematical and
  Theoretical}\ }\textbf {\bibinfo {volume} {50}},\ \bibinfo {pages} {135302}
  (\bibinfo {year} {2017})}\BibitemShut {NoStop}%
\bibitem [{\citenamefont {Klyachko}(2006)}]{Klyachko2006}%
  \BibitemOpen
  \bibfield  {author} {\bibinfo {author} {\bibfnamefont {A.~A.}\ \bibnamefont
  {Klyachko}},\ }\href {\doibase 10.1088/1742-6596/36/1/014} {\bibfield
  {journal} {\bibinfo  {journal} {Journal of Physics: Conference Series}\
  }\textbf {\bibinfo {volume} {36}},\ \bibinfo {pages} {72–86} (\bibinfo
  {year} {2006})}\BibitemShut {NoStop}%
\end{thebibliography}%

\pagebreak

\appendix

\onecolumngrid

\section{Appendix A: The connection between multi-channel games and resource quantifiers in the finite-dimensional case}
\label{app:tuples}

For sets of channels, the needed modification to quantum games is that one considers individual games for each channel, and the payoff is defined as the sum of individual payoffs. Formally, we have a game $\vec{\mc G}$ consisting of individual games $\mc G_i$ having individual input states $\{\varrho^i_a\}_a$, measurements $\{M^i_b\}_b$, and score assignment $\{\omega_{ab}^i\}_{ab}$. For a channel tuple $\vec\Lambda=\{\Lambda_i\}_i$ the payoff is defined as
\begin{align}
    \mathcal P(\vec\Lambda,\vec{\mc G}):=\sum_{a,b,i}\omega_{ab}^i\text{tr}[\Lambda_i(\varrho^i_a)M_b^i].
\end{align}
It is straight-forward to check that the connections given by Eq.~(\ref{eqn:ChanRobAdv}) and Eq.~(\ref{eqn:ChanWeiAdv}) work for state tuples as well

For a channel tuple $\vec\Lambda$, the robustness measure is defined as
\begin{equation} \label{eqn:RobustnessChantuple}
  \mathcal R_{F,N}(\vec\Lambda) = \min\qty{t\geq 0\,\,\bigg|\,\, \frac{\vec\Lambda + t\vec{\tilde\Lambda}}{1+t}\in F,\,\vec{\tilde\Lambda}\in N},
\end{equation}
where $\vec{\tilde\Lambda}$ is a channel tuple of the same length as $\vec\Lambda$, and the set $F$ is a convex and closed set of channel tuples of the same length. The convex weight is defined as
\begin{equation} \label{eqn:WeightChantuple}
  \mathcal W_F(\vec\Lambda) = \min\qty{\mu\geq 0\,\,\bigg|\,\,\vec\Lambda= \mu\vec{\tilde\Lambda}+(1-\mu)\vec\Lambda_F},
\end{equation}
where $\vec{\tilde\Lambda}$ is a channel tuple with the same length as $\vec\Lambda$, and $\vec\Lambda_F\in F$. Using again the techniques of Refs.~\cite{TakagiRegula19,Uola20,UolaBullock20}, one sees that the duals of these quantifiers read
\begin{eqnarray}
\label{eqn:RobDualtuple}
1+\mathcal R_{F,N}(\vec\Lambda) = \max_{\vec Y}\, && \sum_i\text{tr}[\Lambda_i Y_i] \\
\text{s.t.: }&& \sum_i\text{tr}[\Lambda_F^i Y_i]\leq 1\, \forall\ \vec\Lambda_F\in F \notag\\
&& \sum_i\text{tr}[\tilde\Lambda_i Y_i]\geq0\, \forall\ \vec{\tilde\Lambda}\in N \notag
\end{eqnarray}
and
\begin{eqnarray}
\label{eqn:WeiDualtuple}
    1-\mathcal{W}_F(\vec\Lambda) = \min_{\vec Y\geq0}\, && \sum_i\text{tr}[\Lambda_i Y_i] \\
\text{s.t.: }&& \sum_i\text{tr}[\Lambda_F^i Y_i]\geq1\ \forall \ \vec\Lambda_F\in F\notag,
\end{eqnarray}
where $\vec Y=\{Y_i\}_i$.

To see the connection to quantum games $\vec{\mc G}$, we note that the components of the witness $\vec Y$ can be written again as $Y_i=\sum_{ab}\omega_{ab}^i(\varrho^i_a)^T\otimes M^i_b$, which shows that for an optimal witness $\vec Y^r$ of the robustness and $\vec Y^w$ of the weight, the corresponding instances of the games $\vec{\mc G}_{\vec Y^r}$ and $\vec{\mc G}_{\vec Y^w}$ give
\begin{align}
\label{eqn:RobineqChantuple}
    \frac{\mathcal P(\vec\Lambda,\vec{\mc G}_{\vec Y^r})}{\max_{\vec\Lambda_F\in F} \mathcal P(\vec\Lambda_F,\vec{\mc G}_{Y^r})}\geq 1+ \mathcal R_{F,N}(\vec\Lambda)
\end{align}
and
\begin{align}
\label{eqn:WeiineqChantuple}
    \frac{\mathcal P(\vec\Lambda,\vec{\mc G}_{Y^w})}{\min_{\vec\Lambda_F\in F} \mathcal P(\vec\Lambda_F,\vec{\mc G}_{Y^w})}\leq 1- \mathcal W_F(\vec\Lambda).
\end{align}
Using the form of $\Lambda$ from Eq.~(\ref{eqn:RobustnessChantuple}) and Eq.~(\ref{eqn:WeightChantuple}) together with the linearity of the games, we have
\begin{align}
\label{eqn:RobAdvChantuple}
    \sup_{\vec{\mc G}_N}\frac{\mathcal P(\vec\Lambda,\vec{\mc G}_N)}{\max_{\vec\Lambda_F\in F} \mathcal P(\vec\Lambda_F,\vec{\mc G}_N)}= 1+ \mathcal R_{F,N}(\vec\Lambda)
\end{align}
and
\begin{align}
\label{eqn:WeiAdvChantuple}
    \inf_{\vec{\mc G}}\frac{\mathcal P(\vec\Lambda,\vec{\mc G})}{\min_{\vec\Lambda_F\in F} \mathcal P(\vec\Lambda_F,\vec{\mc G})}= 1- \mathcal W_F(\vec\Lambda),
\end{align}
where the optimization goes as follows: for $N$ being the whole set, as well as for the convex weight, the optimization runs over those games $\vec{\mc G}_N$ that have a non-negative payoff for any input channel tuple, and for the case $N=F$, one optimises over those games $\vec{\mc G}_N$ that have a non-negative payoff for the free set, and in every case we leave out games that would make the corresponding denominator zero.

\section{Appendix B: Properties of the approximate robustness}
\label{app:Properties}

Let us fix an approximation $\mathbb A=\{\alpha_n,\beta_n\}_n$. We first show that the sequence $\big(\mc R_{\overline{F}_n,\overline{N}_n}(\beta_n\circ\Lambda\circ\alpha_n)\big)_n$ is increasing. Using our final assumption when defining the approximations, it follows that, whenever $m\leq n$, $\{\beta_m\circ\Lambda_F\circ\alpha_m\,|\,\Lambda_F\in F_n\}=F_m$ and similarly for $N_m$ and $N_n$. Moreover, whenever $m\leq n$, we have $\mc R_{\overline{F}_m,\overline{N}_m}(\beta_m\circ\Lambda\circ\alpha_m)\leq\mc R_{\overline{F}_n,\overline{N}_n}(\beta_n\circ\Lambda\circ\alpha_n)$ for any channel $\Lambda$; note that the closures $\overline{F}_n$ and $\overline{N}_n$ are taken within the finite-dimensional set of channels between $\mc H_n$ and $\mc K_n$. Indeed, if $R\geq0$ and $\Lambda_N\in N_n$ are such that $(1+R)^{-1}(\beta_n\circ\Lambda\circ\alpha_n+R\Lambda_N)\in F_n$ (so that $R\geq\mc R_{\overline{F}_n,\overline{N}_n}(\beta_n\circ\Lambda\circ\alpha_n)$), we have
$$
\frac{1}{1+R}(\beta_m\circ\beta_n\circ\Lambda\circ\alpha_n\circ\alpha_m+\beta_m\circ R\Lambda_N\circ\alpha_m)=\frac{1}{1+R}(\beta_m\circ\Lambda\circ\alpha_m+R\beta_m\circ \Lambda_N\circ\alpha_m)\in F_m,
$$
implying that $R\geq\mc R_{\overline{F}_m,\overline{N}_m}(\beta_m\circ\Lambda\circ\alpha_m)$. The claim now follows as we let $R\searrow\mc R_{\overline{F}_n,\overline{N}_n}(\beta_n\circ\Lambda\circ\alpha_n)$. Similarly, $\mc R_{\overline{F}_n,\overline{N}_n}(\beta_n\circ\Lambda\circ\alpha_n)\leq\mc R_{F,N}(\Lambda)$ for all channels $\Lambda$ and $n\in\N$. All in all, $\big(\mc R_{\overline{F}_n,\overline{N}_n}(\beta_n\circ\Lambda\circ\alpha_n)\big)_{n=1}^\infty$ is an increasing sequence upper-bounded by $\mc R_{F,N}(\Lambda)$.

Let us now prove the claim concerning the monotonicity of the approximate robustness $\mc R^{\mathbb A}_{F,N}$. Let $\Theta$ be operation (`superchannel') such that $\Theta(F)\subseteq F$ and $\Theta(N)\subseteq N$ and let $n\in\N$, $R\geq0$ and $\Lambda^N_n\in N_n$ be such that $(1+R)^{-1}(\beta_n\circ\Lambda\circ\alpha_n+\Lambda^N_n)=:\Lambda^F_n\in F_n$ and pick $\Lambda_N\in N$ such that $\beta_n\circ\Lambda_N\circ\alpha_n=\Lambda^N_n$ and $\Lambda_F\in F$ such that $\beta_n\circ\Lambda_F\circ\alpha_n=\Lambda^F_n$. We now have $(1+R)^{-1}(\beta_n\circ\Theta(\Lambda)\circ\alpha_n+R\beta_n\circ\Theta(\Lambda_N)\circ\alpha_n)=\beta_n\circ\Theta(\Lambda_F)\circ\alpha_n\in F_n$, so that
$$
\mc R_{\overline{F}_n,\overline{N}_n}\big(\beta_n\circ\Theta(\Lambda)\circ\alpha_n\big)\leq\mc R_{\overline{F}_n,\overline{N}_n}(\beta_n\circ\Lambda\circ\alpha_n)
$$
as we let $R\searrow\mc R_{\overline{F}_n,\overline{N}_n}(\beta_n\circ\Lambda\circ\alpha_n)$. Naturally, the same also holds for the supremums over $n$ of the LHS and RHS, implying the claim.

\section{Appendix C: Proof of Theorem \ref{Theorem1}}
\label{app:Proof1}

In this appendix, we first prove Theorem \ref{Theorem1}, and then give some examples where the conditions are met. We first present a useful lemma. We let $\mc V$ be a (real or complex) vector space and $A\subseteq\mc V$ be an affine plane, i.e.,\ whenever $x,\,y\in A$ and $\lambda\in\R$, then $\lambda x+(1-\lambda)y\in A$. We let $F\subseteq A$ be convex and $N$ be some other subset of $A$. For $x,\,y\in A$, we define $\mc R_F(x|y)\in[0,\infty]$ as the infimum of those $r\geq0$ such that $(1+r)^{-1}(x+ry)\in F$ where $\inf\emptyset:=\infty$. Moreover, we define the $(F,N)$-robustness, as usual through $\mc R_{F,N}(x)=\inf_{y\in N}\mc R_F(x|y)$.

\begin{lemma}\label{lemma:genRob}
The robustness function $\mc R_{F,N}:A\to[0,\infty]$ is convex. Whenever $\mc V$ is a topological vector space, $A$ and $F$ are closed, and $N$ is relatively compact, $\mc R_{F,N}$ is faithful w.r.t.\ $F$. If, additionally, $N$ is compact, $\mc R_{F,N}$ is lower semi-continuous.
\end{lemma}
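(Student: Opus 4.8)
The plan is to prove the three assertions more or less separately; the common thread is that the infimum over the noise $y\in N$ is controlled by a compactness argument, and this is also the step I expect to require the most care.

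\textbf{Convexity.} First I would record that for each fixed $y$ the map $x\mapsto\mc R_F(x\mid y)$ is convex: writing $x_i=(1+r_i)f_i-r_iy$ with $f_i\in F$ and setting $t=\mu r_1+(1-\mu)r_2$, one checks the ``noise-mixing'' identity $\mu x_1+(1-\mu)x_2+ty=(1+t)g$, where $g$ is a convex combination of $f_1,f_2$ and hence lies in $F$; so $\mc R_F(\mu x_1+(1-\mu)x_2\mid y)\le t$, and taking infima over the $r_i$ gives the claim. For $\mc R_{F,N}$ itself I would run the same identity but allowing the noises $y_i\in N$ attached to $x_1,x_2$ to differ, in which case the new noise is the convex combination $y=(\mu r_1y_1+(1-\mu)r_2y_2)/t$ (with the degenerate case $t=0$, i.e.\ $x_1,x_2\in F$, handled separately). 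This step uses convexity of $N$; since in every application of the paper $N$ is either the full affine set of channels or $F$ itself, I would carry convexity of $N$ as a standing hypothesis here. Passing to the infimum over near-optimal $r_i$ then yields convexity of $\mc R_{F,N}$.

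\textbf{Faithfulness and lower semi-continuity.} The inclusion $F\subseteq\{x:\mc R_{F,N}(x)=0\}$ is immediate by taking $r=0$ (assuming $N\neq\emptyset$). Conversely, if $\mc R_{F,N}(x)=0$, choose $y_n\in N$ and $r_n\downarrow0$ with $f_n:=(1+r_n)^{-1}(x+r_ny_n)\in F$; relative compactness of $N$ gives a subnet along which $y_n$ converges, and since scalar multiplication and addition are continuous in a topological vector space and $r_n\to0$, we get $f_n\to x$, whence $x\in F$ because $F$ is closed. For lower semi-continuity I would show every sublevel set $\{x\in A:\mc R_{F,N}(x)\le c\}$ is closed: given $x_\alpha\to x$ with $\mc R_{F,N}(x_\alpha)\le c$, fix $\delta>0$, pick $y_\alpha\in N$ and $r_\alpha\in[0,c+\delta)$ with $(1+r_\alpha)^{-1}(x_\alpha+r_\alpha y_\alpha)\in F$, pass to a subnet with $y_\alpha\to y\in N$ (using compactness of $N$) and a further subnet with $r_\alpha\to r\in[0,c+\delta]$, and use continuity of the vector-space operations (the denominators stay $\ge1$) together with closedness of $F$ to conclude $(1+r)^{-1}(x+ry)\in F$, hence $\mc R_{F,N}(x)\le\mc R_F(x\mid y)\le r\le c+\delta$; letting $\delta\downarrow0$ finishes it.

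\textbf{The main obstacle.} The only real subtlety is that the infimum defining $\mc R_{F,N}$ is generally not attained, so one must extract a convergent subnet of noises while simultaneously controlling the scalar $r$ — and this is exactly where the two compactness hypotheses diverge. In the faithfulness argument the relevant scalars tend to $0$, so $r_ny_n\to0$ no matter where the noises cluster and the limiting noise is never actually used; relative compactness of $N$ therefore suffices, with closedness of $F$ doing the rest. In the lower-semicontinuity argument the scalars only stay bounded, so the limiting noise genuinely enters the estimate and must itself be admissible, i.e.\ lie in $N$; this forces $N$ to be closed, hence \emph{compact} rather than merely relatively compact. Beyond this I would only need to dispose of the degenerate cases ($N=\emptyset$, $t=0$, and the value $+\infty$ with the usual conventions on $[0,\infty]$), which are routine.
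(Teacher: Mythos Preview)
Your proposal is correct and follows essentially the same route as the paper: the faithfulness argument is identical, and for lower semi-continuity the paper argues by contradiction while you show sublevel sets are closed, but the compactness-and-subnet extraction is the same in both. One point worth flagging: the paper simply cites a reference for convexity, whereas you spell the argument out and correctly observe that convexity of $N$ is needed for the noise-mixing step --- the lemma as stated does not assume this, but (as you note) every application in the paper has $N$ convex, so your added hypothesis is harmless and in fact clarifies a tacit assumption.
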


\begin{proof}
The first claim concerning convexity is well known and proven, e.g.,\ in \cite{Haapasalo2015}. Let us assume that $\mc V$ is a topological vector space, $A$ and $F$ are closed, and $N$ is relatively compact. Showing that $\mc R_{F,N}(x)=0$ whenever $x\in F$ is easy; let us concentrate on showing the converse. Assume that $x\in A$ is such that $\mc R_{F,N}(x)=0$. It follows that, for all $n\in\N$, there are $r_n\in[0,1/n)$ and $y_n\in N$ such that $z_n:=(1+r_n)^{-1}(x+r_n y_n)\in F$. Using again the compactness of $N$ and, if necessary, by moving on to a subsequence, we find $y\in\overline{N}$ (where $\overline{N}$ is the compact closure of $N$) such that $y_n\overset{n\to\infty}{\to}y$. As $\lim_{n\to\infty}r_n=0$, we see that $z_n\overset{n\to\infty}{\to}(1+0)^{-1}(x+0y)=x$ and, since $F$ is closed, we find $x\in F$.

Let us, additionally, assume that $N$ is compact. We make the counter assumption that $\mc R_{F,N}$ is not lower semi-continuous. This means that there is $\alpha\geq0$, $x\in A$, and a net $(x_\lambda)_{\lambda\in L}\subset A$ which converges to $x$ such that $\alpha<\mc R_{F,N}(x)$ and $\mc R_{F,N}(x_\lambda)\leq\alpha$ for all $\lambda\in L$. Denote $\varepsilon:=(1/2)\big(\mc R_{F,N}(x)-\alpha\big)>0$. It follows that, for all $\lambda\in L$, there are $r_\lambda\in[0,\alpha+\varepsilon]$ and $y_\lambda\in N$ such that $z_\lambda:=(1+r_\lambda)^{-1}(x_\lambda+r_\lambda y_\lambda)\in F$ (i.e.,\ $\mc R_{F,N}(x_\lambda)\leq r_\lambda$). Using the compactness of $[0,\alpha+\varepsilon]$ and $N$ and, if necessary, by moving on to a subnet, we find $r\in[0,\alpha+\varepsilon]$ and $y\in N$ such that $r_\lambda\overset{\lambda\in L}{\to}r$ and $y_\lambda\overset{\lambda\in L}{\to}y$. As $x$ is the limit of $(x_\lambda)_{\lambda\in L}$, we now see that $z_\lambda\overset{\lambda\in L}{\to}z:=(1+r)^{-1}(x+ry)$ and, as $F$ is closed, $z\in F$. Thus,
$$
\mc R_{F,N}(x)\leq r\leq\alpha+\varepsilon<\mc R_{F,N}(x),
$$
a contradiction. Hence, $\mc R_{F,N}$ is lower semi-continuous.
\end{proof}

Let us resume the assumptions made before the statement of Theorem \ref{Theorem1}. Let us formally define the topology $\tau$: this is the coarsest topology within the set $\mc L(\mc K)^{\mc L(\hil)}$ of all functions $\Phi:\mc L(\mc K)\to\mc L(\hil)$ (where, e.g.,\ $\mc L(\hil)$ stands for the algebra of bounded linear operators on $\hil$) such that, for any $\varrho\in\mc T(\hil)$ (a trace-class operator) and $B\in\mc L(\mc K)$, the map $\mc L(\mc K)^{\mc L(\hil)}\ni\Phi\mapsto{\rm tr}[\varrho\Phi(B)]$ is continuous. The $\tau$-closure of the set ${\bf Ch}(\hil,\mc K)$ of all channels (i.e.\ completely positive trace-preserving maps) $\Lambda:\mc T(\hil)\to\mc T(\mc K)$ is the set ${\bf UCP}(\hil,\mc K)$ of unital completely positive linear maps $\Phi:\mc L(\mc K)\to\mc L(\hil)$. This set is, in fact, $\tau$-compact which is easily seen by using the Alao\u{g}lu-Bourbaki and Tykhonoff theorems.
Note that we identify any (Schr\"odinger) channel $\Lambda:\, \mathcal{T(H)}\to\mathcal{T(K)}$ with its dual (Heisenberg) channel $\Lambda^*:\, \mathcal{L(K)}\to\mathcal{L(H)}$ (defined by $\tr{\Lambda^*(B)\varrho}=\tr{B\Lambda(\varrho)}$).

We define $\overline{\mc R}_{F,N}:{\bf UCP}(\hil,\mc K)\to[0,\infty]$ by setting $\overline{\mc R}_{F,N}=\mc R_{\overline{F},\overline{N}}$ where $\overline{F}$ is the $\tau$-closure of $F$ and $\overline{N}$ is the $\tau$-closure of $N$. According to Lemma \ref{lemma:genRob}, $\overline{R}_{F,N}$ is lower semi-continuous and faithful w.r.t.\ $\overline{F}$. It clearly follows that $\overline{\mc R}_{F,N}(\Phi)\leq\mc R_{F,N}(\Phi)$ for any ${\bf UCP}(\hil,\mc K)$ and, especially, for any channel $\Phi$. We define the infinite-dimensional games in the obvious way: a game is a collection $\mc G=\big(\{\varrho_a\}_{a\in A},\{M_b\}_{b\in B},\{\omega_{a,b}\}_{a\in A,\,b\in B}\big)$ where $A$ and $B$ are finite sets, $\varrho_a$ is an input state on $\hil$ for all $a\in A$, $\{M_b\}_{b\in B}$ is a POVM, i.e.,\ a collection of positive operators on $\mc K$ summing up to the identity, and $\omega_{a,b}$, $a\in A$, $b\in B$, are real numbers, and the pay-off of $\mc G$ for a given channel $\Lambda$ is
$$
\mc P(\Lambda,\mc G):=\sum_{a\in A}\sum_{b\in B}\omega_{a,b}\tr{\Lambda(\varrho_a)M_b}.
$$
Note that, in principle the infinite-dimensional setting would allow for continuous measurements and state ensembles. Although our results would also work for a larger class of games arising from this setup, we consider only discrete games, as the desired suprema and infima can be arbitrarily well approximated by finite inputs and outputs. For any game $\mc G$, the map $\mc P(\cdot,\mc G)$ can be extended into a $\tau$-continuous map on ${\bf UCP}(\hil,\mc K)$; note that, in this setting, we have to write the pay-off in Heisenberg picture. Pick $\Phi\in{\bf UCP}(\hil,\mc K)$ and suppose, for a while, that $N=F$ or $N={\bf Ch}(\hil,\mc K)$. Suppose that $R\geq0$ and $\Phi_N\in\overline{N}$ are such that $(1+R)^{-1}(\Phi+R\Phi_N)=\Phi_F\in\overline{F}$. We have, for any game $\mc G$ such that $\mc P(\Phi_N,\mc G)\geq0$ for all $\Phi_N\in\overline{N}$, $(1+R)\mc P(\Phi_F,\mc G)-\mc P(\Phi,\mc G)=R\mc P(\Phi_N,\mc G)\geq0$, implying that
$$
(1+R)\sup_{\Phi_F\in\overline{F}}\mc P(\Phi_F,\mc G)-\mc P(\Phi,\mc G)\geq0
$$
Since $\overline{F}$ is compact and the pay-off function is $\tau$-continuous, we have $\sup_{\Lambda_F\in F}\mc P(\Lambda_F,\mc G)=\sup_{\Phi_F\in\overline{F}}\mc P(\Phi_F,\mc G)=\max_{\Phi_F\in\overline{F}}\mc P(\Phi_F,\mc G)\in\R$. Hence, by assuming that $\Phi=\Lambda^*$ is a channel and letting $R\searrow\overline{\mc R}_{F,N}(\Lambda)$,
$$
\frac{\mc P(\Lambda,\mc G)}{\sup_{\Lambda_F\in F}\mc P(\Lambda_F,\mc G)}\leq 1+\overline{\mc R}_{F,N}(\Lambda)\leq 1+\mc R_{F,N}(\Lambda).
$$

The above regularized robustness $\overline{\mc R}_{F,N}$ and its connection to outperformance in games will be useful in the following proof where we resume the assumptions made in the claim of Theorem \ref{Theorem1}.
\vspace{15pt}

\noindent{\it Proof of Theorem \ref{Theorem1}.} From our assumption it follows that $F$ is $\tau$-compact in $\mc L(\mc K)^{\mc L(\hil)}$ (or, equivalently, within ${\bf UCP}(\hil,\mc K)$). Let us first show that, for any $\Lambda\in{\bf Ch}(\hil,\mc K)$ and any approximation scheme $\mathbb A=\{\alpha_n,\beta_n\}_n$, $\beta_n\circ\Lambda\circ\alpha_n\to\Lambda$ w.r.t.\ $\tau$. Pick $\Lambda\in{\bf Ch}(\hil,\mc K)$, an approximation scheme $\mathbb A=\{\alpha_n,\beta_n\}_n$, and $\varrho\in\mc T(\hil)$ and $B\in\mc L(\mc K)$. We have
\begin{align*}
\left|\tr{\big(\Lambda(\varrho)-(\beta_n\circ\Lambda\circ\alpha_n)(\varrho)\big)B}\right|&\leq\left|\tr{\Lambda\big(\varrho-\alpha_n(\varrho)\big)B}\right|+\left|\tr{\Lambda\big(\alpha_n(\varrho)\big)\big(B-\beta_n^*(B)\big)}\right|\\
\leq&\|B\|\|\varrho-\alpha_n(\varrho)\|_{\rm tr}+\left|\tr{\Lambda\big(\varrho-\alpha_n(\varrho)\big)\big(B-\beta_n^*(B)\big)}\right|+\left|\tr{\Big(\Lambda(\varrho)-\beta_n\big(\Lambda(\varrho)\big)\Big)B}\right|\\
\leq&3\|B\|\underbrace{\|\varrho-\alpha_n(\varrho)\|_{\rm tr}}_{\overset{n\to\infty}{\to}0}+\|B\|\underbrace{\left\|\Lambda(\varrho)-\beta_n\big(\Lambda(\varrho)\big)\right\|_{\rm tr}}_{\overset{n\to\infty}{\to}0}\overset{n\to\infty}{\to}0.
\end{align*}

Let us fix $\Lambda\in{\bf Ch}(\hil,\mc K)$ and an approximation scheme $\mathbb A=\{\alpha_n,\beta_n\}_n$. Denote $R_n:=\mc R_{\overline{F}_n,\overline{N}_n}(\beta_n\circ\Lambda\circ\alpha_n)$ and $\Lambda_n:=\beta_n\circ\Lambda\circ\alpha_n$ for all $n\in\N$. For any $n\in\N$, we find $\Lambda^N_n\in N_n$ such that $\Lambda^F_n:=(1+R_n)^{-1}(\Lambda_n+R_n\Lambda^N_n)\in F_n$; note that, as an image of the $\tau$-compact set $F$ in an obviously $\tau$-continuous map $\Lambda_F\mapsto\beta_n\circ\Lambda_F\circ\alpha_n$, $F_n$ is compact and, hence, $\overline{F}_n=F_n$. Possibly by again moving on to a subsequence, we may assume that $\tau-\lim_{n\to\infty}\Lambda^N_n=\Phi_N\in{\bf UCP}(\hil,\mc K)$. If $N={\bf Ch}(\hil,\mc K)$, the rest of this paragraph can be omitted. Assume now that $N$ is $\tau$-closed in $\mc L(\mc K)^{\mc L(\hil)}$. This means that $N$ is $\tau$-compact as a subset of the $\tau$-compact set ${\bf UCP}(\hil,\mc K)$. Pick, for any $n\in\N$, $\Gamma^N_n\in N$ such that $\beta_n\circ\Gamma^N_n\circ\alpha_n=\Lambda^N_n$. Using the $\tau$-compactness of $N$ and by possibly moving on to a subsequence, we have some $\Lambda_N\in N$ such that $\tau-\lim_{n\to\infty}\Gamma^N_n=\Lambda_N$. Let us show that $\Phi_N=\Lambda_N^*$. For this, let us fix $\varrho\in\mc T(\hil)$ and $B\in\mc L(\mc K)$. We have, for all $n\in\N$,
\begin{align*}
\left|\tr{\varrho\big(\Phi_N(B)-\Lambda_N^*(B)\big)}\right|\leq&\left|\tr{\varrho\big(\Phi_N(B)-\Lambda^{N\,*}_n(B)\big)}\right|+\left|\tr{\varrho\big(\Lambda_N^*(B)-(\alpha_n^*\circ\Lambda_N^*\circ\beta_n^*)(B)\big)}\right|\\
&+\left|\tr{\Big(\Lambda_N\big(\alpha_n(\varrho)\big)-\Gamma^N_n\big(\alpha_n(\varrho)\big)\Big)\beta_n^*(B)}\right|.
\end{align*}
Above, the first two terms in the upper bound are easily seen to converge to zero as $n\to\infty$, so let us concentrate on the last term:
\begin{align*}
\left|\tr{\Big(\Lambda_N\big(\alpha_n(\varrho)\big)-\Gamma^N_n\big(\alpha_n(\varrho)\big)\Big)\beta_n^*(B)}\right|\leq&\|B\|\left\|\Lambda_N\big(\alpha_n(\varrho)\big)-\Gamma^N_n\big(\alpha_n(\varrho)\big)\right\|_{\rm tr}\\
\leq&\|B\|\left(\|\Lambda_N(\varrho)-\Gamma^N_n(\varrho)\|_{\rm tr}+\left\|(\Lambda_N-\Gamma^N_n)\big(\varrho-\alpha_n(\varrho)\big)\right\|_{\rm tr}\right)\\
\leq&\|B\|\big(\|\Lambda_N(\varrho)-\Gamma^N_n(\varrho)\|_{\rm tr}+2\|\varrho-\alpha_n(\varrho)\|_{\rm tr}\big).
\end{align*}
Let us look at the two terms in parentheses after the final inequality above: We have that $\Gamma^N_n(\varrho)\to\Lambda_N(\varrho)$ as $n\to\infty$ w.r.t.\ the $\sigma$-weak topology generated by the dual $\mc L(\mc K)$ of $\mc T(\mc K)$. Utilizing Lemma 4.3 of \cite{davies1969}, we find that the first term in parentheses converges to 0. The second term clearly converges to 0 by the definition of approximations. Thus, $\Phi_N=\Lambda_N^*$.

As $R_n\overset{n\to\infty}{\to}R:=\mc R_{F,N}^{\mathbb A}(\Lambda)$, $\Lambda_n\overset{n\to\infty}{\to}\Lambda$ w.r.t.\ $\tau$ (see the first paragraph of this proof) and $\Lambda^N_n\overset{n\to\infty}{\to}\Phi_N$ (where $\Phi_N$ is in the $\tau$-closure of $N$; note that this closure might differ from $N$ only in the case when $N={\bf Ch}(\hil,\mc K)$), we have $\Lambda^{F\,*}_n\overset{n\to\infty}{\to}(1+R)^{-1}(\Lambda^*+R\Phi_N)=:\Phi_F\in{\bf UCP}(\hil,\mc K)$ w.r.t.\ $\tau$. By picking, for any $n\in\N$, $\Gamma^F_n\in F$ such that $\beta_n\circ\Gamma^F_n\circ\alpha_n=\Lambda^F_n$ and, by using the assumption that $F$ is $\tau$-compact and possibly moving on to a subsequence, we find $\Lambda_F\in F$ such that $\tau-\lim_{n\to\infty}\Gamma^F_n=\Lambda_F$. Similarly as above we can show that $\Phi_F=\Lambda_F^*$. Thus, $(1+R)^{-1}(\Lambda^*+R\Phi_N)=\Lambda_F^*\in F^*$ where $F^*$ is the set of Heisenberg duals of free channels. Thus, $\Phi_N\in N^*$ as well (which was already known in the case where $N$ is $\tau$-closed). All in all, there is $\Lambda_N\in N$ such that $(1+R)^{-1}(\Lambda+R\Lambda_N)\in F$, implying that $R\geq\mc R_{F,N}(\Lambda)$. As we have earlier seen that $R\leq\mc R_{F,N}(\Lambda)$, we have proven the first claim. Moreover, these robustness measures further coincide with $\mc R_{\overline{F},N}$ under the assumptions of the claim, and, according to Lemma \ref{lemma:genRob}, this measure is faithful w.r.t. $\overline{F}=F$, proving the second claim.

Let us keep the channel $\Lambda\in{\bf Ch}(\hil,\mc K)$ and the approximation process $\mathbb A=\{\alpha_n,\beta_n\}_n$ fixed and assume that $N$ is $F$ or ${\bf Ch}(\hil,\mc K)$. Just before this proof we have seen that, for any (infinite-dimensional) game $\mc G$ such that $\mc P(\Lambda_N,\mc G)\geq0$ for all $\Lambda_N\in N$,
$$
\frac{\mc P(\Lambda,\mc G)}{\sup_{\Lambda_F\in F}\mc P(\Lambda_F,\mc G)}\leq 1+\mc R_{F,N}(\Lambda).
$$
Let us fix $\varepsilon>0$ and show that there is a game $\mc G$ such that $\mc P(\Lambda_N,\mc G)\geq0$ for all $\Lambda_N\in N$ and the LHS of the above inequality is closer than $\varepsilon$ to the RHS. We still denote $\Lambda_n:=\beta_n\circ\Lambda\circ\alpha_n$ and $R_n:=\mc R_{\overline{F}_n,\overline{N}_n}(\Lambda_n)$ for all $n\in\N$. Using the established fact that $\mc R_{F,N}^{\mathbb A}(\Lambda)=\mc R_{F,N}(\Lambda)=:R$ and the definition of $\mc R_{F,N}^{\mathbb A}(\Lambda)$, we find $n_0\in\N$ such that $R_{n_0}+\varepsilon/2>R$. Suppose that $\Lambda_0\in F$ is such that $\Lambda_0(\varrho)$ is faithful for all input states $\varrho$. It easily follows that $\beta_n\circ\Lambda_0\circ\alpha_n$ takes any state into a full-rank state on $\mc K_n$. The Choi-state of $\beta_n\circ\Lambda_0\circ\alpha_n$ is now easily seen to be of full rank, so that the Slater condition holds for any approximation step $n\in\N$; note that the evaluation of $R_n$ can be viewed as a finite-dimensional problem. Thus, there is a game $\mc G_0=\big(\{\varrho^0_a\}_{a\in A},\{M^0_b\}_{b\in B},\{\omega^0_{a,b}\}_{a\in A,\, b\in B}\big)$ (where $\# A,\,\# B<\infty$) such that
$$
\frac{\mc P(\Lambda_{n_0},\mc G_0)}{\sup_{\Lambda^0_F\in F_{n_0}}\mc P(\Lambda^0_F,\mc G_0)}+\varepsilon/2>1+R_{n_0}.
$$
Let us define the new game $\mc G_1=\big(\{\varrho_a\}_{a\in A},\{M_b\}_{b\in B},\{\omega_{a,b}\}_{a\in A\,b\in B}\big)$ where $\varrho_a=\alpha_{n_0}(\varrho^0_a)$, $M_b=\beta_{n_0}^*(M^0_b)$, and $\omega_{a,b}=\omega^0_{a,b}$ for all $a\in A$ and $b\in B$; note that $\{M_b\}_{b\in B}$ is still a POVM since $\beta_{n_0}^*$ is unital. It easily follows that, for any $\Gamma\in{\bf Ch}(\hil,\mc K)$, $\mc P(\Gamma,\mc G_1)=\mc P(\beta_{n_0}\circ\Gamma\circ\alpha_{n_0},\mc G_0)$, so that $\mc P(\Lambda_N,\mc G_1)\geq0$ for all $\Lambda_N\in\N$ and (using the definition of $F_{n_0}$)
$$
\frac{\mc P(\Lambda_{n_0},\mc G_0)}{\sup_{\Lambda^0_F\in F_{n_0}}\mc P(\Lambda^0_F,\mc G_0)}=\frac{\mc P(\Lambda,\mc G_1)}{\sup_{\Lambda_F\in F}\mc P(\beta_{n_0}\circ\Lambda_F\circ\alpha_{n_0},\mc G_0)}=\frac{\mc P(\Lambda,\mc G_1)}{\sup_{\Lambda_F\in F}\mc P(\Lambda_F,\mc G_1)}.
$$
Hence,
\begin{align*}
1+R-\frac{\mc P(\Lambda,\mc G_1)}{\sup_{\Lambda_F\in F}\mc P(\Lambda_F,\mc G_1)}=&R-R_{n_0}+1+R_{n_0}-\frac{\mc P(\Lambda,\mc G_1)}{\sup_{\Lambda_F\in F}\mc P(\Lambda_F,\mc G_1)}\\
<&\varepsilon/2+1+R_{n_0}-\frac{\mc P(\Lambda_{n_0},\mc G_0)}{\sup_{\Lambda^0_F\in F_{n_0}}\mc P(\Lambda^0_F,\mc G_0)}<\varepsilon/2+\varepsilon/2=\varepsilon,
\end{align*}
proving Eq.~(\ref{eqn:ChanRobinfadv}). The final claim concerning the weight is obtained in essentially the same way with infimums conversed to supremums and other simple modifications. Since the proof is so similar, it is not given here but left for the reader to verify. \hfill $\square$
\vspace{15pt}

Let us take a look at a couple of examples where the conditions of Theorem \ref{Theorem1} hold. These examples involve states, i.e.,\ channels with the trivial input space so that approximation is only done on the output. Let us fix an orthonormal basis $\mc B=\{|k\>\}_k\subset\hil$. We say that a state $\varrho$ is $\mc B$-incoherent if it diagonalizes in $\mc B$ and denote the convex set of these states by $F_{\mc B-{\rm incoh}}$~\cite{Streltsov2017}. This set is, indeed, $\tau$-closed. To see this, let $(\varrho_\lambda)_{\lambda\in L}$ be a net in $F_{\mc B-{\rm incoh}}$ $\tau$-converging to some unital positive linear functional $f:\mc L(\hil)\to\C$. Let $k\neq\ell$. We have
$$
0=\<k|\varrho_\lambda|\ell\>\overset{\lambda\in L}{\to} f(|\ell\>\<k|),
$$
so that $f(|\ell\>\<k|)=0$. Thus, $f$ is effectively a positive linear functional on the Abelian algebra of operators diagonal in $\mc B$, i.e.,\ $f\in(\ell^\infty)^*=\ell^1$. It follows that, denoting $c_k=f(|k\>\<k|)=\lim_{\lambda\in L}\<k|\varrho_\lambda|k\>$ for all $k$,
$$
f(B)=\sum_{k}\<k|B|k\>c_k,\qquad B\in\mc L(\hil),
$$
so that $\sum_{k}c_k=1$ and $f$ corresponds to the $\mc B$-incoherent state $\varrho=\sum_k c_k|k\>\<k|$. Thus the robustness measure $\mc R_{F_{\mc B-{\rm incoh}},N}$ of $\mc B$-coherence has the game interpretation where $N$ is the whole set of states, and a similar argument holds for the convex weight. Note that the free robustness has only the values zero and infinity in the case of coherence.

A related example is provided by asymmetry w.r.t.\ a compact group. Let $G$ be a compact group and $U:G\to\mc U(\hil)$ (where $\mc U(\hil)$ is the group of unitary operators on $\hil$) be a strongly continuous unitary representation. We could make the weaker assumption where $U$ is only a projective representation, however we can even in this case transform $U$ into an ordinary representation by group extension methods. We say that a state $\varrho$ is $U$-symmetric if $\varrho U(g)=U(g)\varrho$ for all $g\in G$ and denote the set of these states by $F_{U-{\rm symm}}$~\cite{Piani2016}. We call $\varrho\notin F_{U-{\rm symm}}$ as $U$-asymmetric. We denote the countable set of unitary equivalence classes of irreducible strongly continuous unitary representations by $\hat{G}$. For each $[\gamma]\in\hat{G}$, we choose a representative $\gamma:G\to\mc U(\mc K_\gamma)$. Using the Peter-Weyl theorem, we find Hilbert spaces $\mc M_\gamma$ such that $\hil=\bigoplus_{[\gamma]\in\hat{G}}\mc K_\gamma\otimes\mc M_\gamma$ and $U(g)=\bigoplus_{[\gamma]\in\hat{G}}\gamma(g)\otimes\openone_{\mc M_\gamma}$. It follows that, denoting $d_\gamma={\rm dim}(\mc K_\gamma)<\infty$ for all $[\gamma]\in\hat{G}$, for any $\varrho\in F_{U-{\rm symm}}$ there are non-normalized positive trace-class operators $\sigma_\gamma$ on $\mc M_\gamma$ such that $\sum_{[\gamma]\in\hat{G}}\tr{\sigma_\gamma}=1$ and $\varrho=\bigoplus_{[\gamma]\in\hat{G}}d_\gamma^{-1}\openone_{\mc K_\gamma}\otimes\sigma_\gamma$. Using similar methods as above, we may show that, if the multiplicity spaces $\mc M_\gamma$ are finite-dimensional, then $F_{U-{\rm symm}}$ is $\tau$-closed.

\section{Appendix D: Proof of Observation \ref{Obs1}}
\label{app:Proof2}

Next we prove Observation \ref{Obs1} and then we give some examples where the conditions of this result are met. In the next proof, we resume these assumptions.
\vspace{15pt}

\noindent{\it Proof of Observation \ref{Obs1}.}
Let $\mathbb A=\{\alpha_n,\beta_n\}_n$ be an approximation procedure such that $F_n\subseteq F$ and $N_n\subseteq N$ for all $n\in\N$. Let us fix $\Lambda\in{\bf Ch}(\hil,\mc K)$ and define $\Lambda_n:=\beta_n\circ\Lambda\circ\alpha_n$ and $R_n:=\mc R_{\overline{F}_n,\overline{N}_n}(\Lambda_n)$. We first prove that $R_n=\overline{\mc R}_{F,N}(\Lambda_n)$ for all $n\in\N$ where the measure $\overline{\mc R}_{F,N}$ is the regularized robustness introduced in the preceding appendix. Pick $n\in\N$ and let $R\geq\overline{\mc R}_{F,N}(\Lambda)$ and $\Phi_N\in\overline{N}$ be such that $(1+R)^{-1}(\Lambda_n^*+\Phi_N)=\Phi_F\in\overline{F}$. By recalling that $\alpha_n$ and $\beta_n$ are idempotent, we have $(1+R)^{-1}(\Lambda_n^*+\alpha_n^*\circ\Phi_N\circ\beta_n^*)=\alpha_n^*\circ\Phi_F\circ\beta_n^*\in\overline{F}_n^*$ implying that $R\geq R_n$; note that $\alpha_n^*\circ\Phi_N\circ\beta_n^*\in\overline{N}_n^*$, as projecting with the approximation channels makes $\Phi_N$ into normal (i.e.,\ ultraweakly continuous), i.e.,\ a channel in its Heisenberg dual. As we let $R\searrow\overline{\mc R}_{F,N}(\Lambda_n)$, we obtain $\overline{\mc R}_{F,N}(\Lambda_n)\geq R_n$. Let us pick, on the other hand, $R'\geq R_n$ and $\Lambda^N_n\in N_n\subseteq N$ such that $(1+R')^{-1}(\Lambda_n+R'\Lambda^N_n)=:\Lambda^F_n\in F_n\subseteq F$. Hence we immediately see that $R'\geq\overline{\mc R}_{F,N}(\Lambda_n)$, so that, as we let $R'\searrow R_n$, we have $R_n\geq\overline{\mc R}_{F,N}(\Lambda_n)$. Thus, $R_n=\overline{\mc R}_{F,N}(\Lambda_n)$. We have established in the beginning of the proof of Theorem \ref{Theorem1} that $\Lambda_n\overset{n\to\infty}{\to}\Lambda$ w.r.t.\ $\tau$, and as $(R_n)_{n=1}^\infty=\big(\overline{\mc R}_{F,N}(\Lambda_n)\big)_{n=1}^\infty$ is an increasing sequence and $\overline{\mc R}_{F,N}$ is lower semi-continuous according to Lemma \ref{lemma:genRob} and the discussion thereafter, we have
$$
\mc R_{F,N}^{\mathbb A}(\Lambda)=\lim_{n\to\infty}R_n=\overline{\mc R}_{F,N}(\Lambda).
$$
This means that, for any approximation procedure $\mathbb A$ satisfying the condition of the claim, $\mc R_{F,N}^{\mathbb A}=\overline{\mc R}_{F,N}$, implying the independence claim. The game interpretation is obtained in exactly the same way as in the latter part of proof of Theorem \ref{Theorem1}. The claims concerning the convex weight are proven in exactly the same way with only minor tweaks.
\hfill $\square$
\vspace{15pt}

Let us take a look at a couple of examples. We first study the free set $F_{EB}$ consisting of entanglement-breaking channels $\Lambda_{M,\sigma}$, which are of the form~\cite{HoShiWe2005}
$$
\Lambda_{M,\sigma}(\varrho)=\int_\Omega\sigma(x)\,\tr{\varrho\,M(dx)},
$$
for a (possibly continuous) POVM $M:\Sigma\to\mc L(\hil)$ defined on a $\sigma$-algebra $\Sigma$ of a non-empty set $\Omega$ and a output-state-valued map $\sigma$ on $\Omega$ such that, for any $B\in\mc L(\mc K)$, the function $\tr{\sigma(\cdot)B}$ is Borel-measurable (w.r.t.\ the trace norm topology of the trace class). Let $N=F_{EB}$ or $N={\bf Ch}(\hil,\mc K)$. We now have, for any approximation procedure $\mathbb A=\{\alpha_n,\beta_n\}_n$, that $\beta_n\circ\Lambda_{M,\sigma}\circ\alpha_n\in F_{EB}$ whenever $\Lambda_{M,\sigma}\in F_{EB}$ and $n\in\N$. Indeed, for any $\Lambda_{M,\sigma}\in F_{EB}$ as above, for any input state $\varrho$, and $n\in\N$, we have
$$
(\beta_n\circ\Lambda_{M,\sigma}\circ\beta_n)(\varrho)=\int_\Omega\beta_n\big(\sigma(x)\big)\,\tr{\varrho\alpha_n^*\big(M(dx)\big)}=\Lambda_{\beta_n\circ\sigma,\alpha_n^*\circ M}(\varrho),
$$
implying this claim. Thus, the conditions of Observation \ref{Obs1} hold. However, the question whether the robustness $\overline{\mc R}_{F_{EB},N}$ associated with the games coincides with $\mc R_{F_{EB},N}$ is still left open; this hinges upon whether $F_{EB}$ is $\tau$-closed or not.

Also the resource of entanglement~\cite{Horodecki2009,Guehne2009} satisfies the conditions of Observation \ref{Obs1}. We denote by $F_{\rm sep}$ the trace-norm closure of the convex hull of product states $\varrho^1\otimes\cdots\otimes\varrho^p$ on $\hil^1\otimes\cdots\otimes\hil^p$. If $\varrho\in F_{\rm sep}$, the state $\varrho$ is fully separable and, otherwise, it is entangled~\cite{HoShiWe2005}. If we choose the approximation (note that only post-processing channels are used on states as their input space is trivial) $\{\beta_n\}_n=\{\beta^1_n\otimes\cdots\otimes\beta^p_n\}_n$ where, for all $i=1,\ldots,p$, $\{\beta^i_n\}_n$ is an approximation for states on $\hil^i$, we immediately see that $\beta_n(\varrho)\in F_{\rm sep}$ whenever $\varrho\in F_{\rm sep}$, implying that the conditions of Observation \ref{Obs1} indeed hold and the game interpretation for $\overline{\mc R}_{F_{\rm sep},N}$ exists whenever $N=F_{\rm sep}$ or $N$ is the whole set of states over $\hil^1\otimes\cdots\otimes\hil^p$.

Until now, we have formulated Theorem \ref{Theorem1} and Observation \ref{Obs1} in the framework of the resource potential of single channels. However, these results also hold for collections $\vec{\Lambda}=(\Lambda^i)_{i=1}^p$ ($p<\infty$) of channels $\Lambda^i:\mc T(\hil^i)\to\mc T(\mc K^i)$. In this setting, we have approximation procedures $\{\vec{\alpha}_n,\vec{\beta}_n\}_{n=1}^\infty$ where, e.g.,\ $\vec{\alpha}_n=(\alpha^1_n,\ldots,\alpha^p_n)$ and, for each $i=1,\ldots,p$, $\{\alpha^i_n,\beta^i_n\}_{n=1}^\infty$ is an approximation procedure for channels $\Lambda:\mc T(\hil^i)\to\mc T(\mc K^i)$. The topology $\tau$ is simply replaced by the $p$-fold product topology of the topologies corresponding to $\tau$ in each component. With these minor tweaks, the proofs of Theorem \ref{Theorem1} and Observation \ref{Obs1} can be extended to the resource theory of collections of channels. The games are now of the form $\vec{\mc G}=\big(\{\varrho^1_a\}_{a\in A_i},\{M^i_b\}_{b\in B_i},\{\omega^i_{a,b}\}_{a\in A_i,\,b\in B_i}\big)_{i=1}^p$ and
$$
\mc P(\vec{\Lambda},\vec{\mc G})=\sum_{i=1}^p\sum_{a\in A_i}\sum_{b\in B_i}\omega^i_{a,b}\tr{\Lambda^i(\varrho^i_a)M^i_b}.
$$

The resource theory of broadcastability, or more generally channel incompatibility~\cite{Heinosaari2017}, is an example of this multipartite scenario. We define $Z:=\prod_{i=1}^p{\bf Ch}(\hil,\mc K^i)$ and say that $\vec{\Lambda}=(\Lambda^i)\in Z$ is compatible if there is $\Gamma\in{\bf Ch}(\hil,\mc K^1\otimes\cdots\otimes\mc K^p)$ such that $\Lambda^i={\rm tr}_{\{\mc K^i\}^c}\circ\Gamma$ for all $i=1,\ldots,\,p$, i.e.,\ $\vec{\Lambda}$ can be seen as reduced dynamics of the joint or broadcasting channel $\Gamma$ (which is typically not unique). We define $F_{\rm comp}\subset Z$ as the set of compatible channel tuples and call $\vec{\Lambda}\in Z\setminus F_{\rm comp}$ as incompatible. In this setting, the relevant approximations are of the form $\{\alpha_n,\vec{\beta}_n\}_n$ where $\vec{\beta}_n=(\beta^1_n,\ldots,\beta^p_n)$, i.e.,\ there is a single sequence $(\alpha_n)_n$ for the shared input space $\hil$. Let us assume that $\vec{\Lambda}=(\Lambda^i)_{i=1}^p\in F_{\rm comp}$ and fix $n\in\N$. Supposing that $\Gamma$ is a joint channel for $\vec{\Lambda}$, it is easy to prove that $(\beta^1_n\otimes\cdots\otimes\beta^p_n)\circ\Gamma\circ\alpha_n$ is a joint channel for $\vec{\beta}_n\circ\vec{\Lambda}\circ\alpha_n:=(\beta^i_n\circ\Lambda^i\circ\alpha_n)_{i=1}^p$. Hence, $\vec{\beta}_n\circ\vec{\Lambda}\circ\alpha_n\in F$. It follows that, for $N=F_{\rm comp}$ or $N=Z$, the counterpart of the condition in Observation \ref{Obs1} holds so that the approximate robustness (w.r.t.\ any approximation) has the game interpretation.

Another example of the multipartite case is given by the state tuple resource related to the quantum marginal problem~\cite{Klyachko2006}. Here the free set $F_{\rm marg}$ is the set of state $p$-tuples $\vec{\varrho}=(\varrho^i)_{i=1}^p\in\prod_{i=1}^p\mc S(\hil\otimes\mc K^i)$ (where $\mc S(\hil\otimes\mc K^i)$ stands for the set of states over the tensor-product Hilbert space $\hil\otimes\mc K^i$) such that there is a state $\varrho$ over $\hil\otimes\mc K^1\otimes\cdots\otimes\mc K^p$ such that $\varrho^i={\rm tr}_{\{\hil\otimes\mc K^i\}^c}[\varrho]$ for all $i=1,\ldots,p$. We say that if $\vec{\varrho}\in F_{\rm marg}$, then the marginal problem associated to $\vec{\varrho}$ has a solution. Since the input space for states is trivial, the approximation procedures now only consist of channel tuples on the output, i.e.,\ $\mathbb A=\{\vec{\beta}_n\}_{n=1}^\infty$ where $\vec{\beta}_n=(\beta^i_n)_{i=1}^p$ and $\{\beta^i_n\}_n$ is an approximation for states in the space $\hil\otimes\mc K^i$. We are able to satisfy the conditions of Observation \ref{Obs1} if the above approximation is of the form where, for each $i=1,\ldots,p$ and $n\in\N$, $\beta^i_n=\gamma^0_n\otimes\gamma^i_n$ and $\{\gamma^0_n\}_n$ is some fixed approximation for states in $\hil$ and $\{\gamma^i_n\}_n$ is an approximation for states in $\mc K^i$ for $i=1,\ldots,p$. Indeed, the proof of this claim is simple and mirrors that of the same claim for the set $F_{\rm comp}$ of compatible channels. Thus, $\mc R^{\mathbb A}_{F_{\rm marg},N}=\overline{\mc R}_{F_{\rm marg},N}$ for any approximation like that above and, whenever $N=F_{\rm marg}$ or $N=\prod_{i=1}^p\mc S(\hil\otimes\mc K^i)$, this robustness has the game interpretation. Note that here the relevant games have the form $\vec{\mc G}=\big(\{M^i_b\}_{b\in B_i},\{\omega^i_{b}\}_{b\in B_i}\big)_{i=1}^p$, where $\{M^i_b\}_{b\in B_i}$ is a POVM and, for all $b\in B_i$, $\omega_b\in\R$ for all $i=1,\ldots,p$, (although other equivalet descriptions of these games exist with possible modification of the state tuples through quantum instruments) and the pay-off for a state tuple $\vec{\varrho}=(\varrho^i)_{i=1}^p$ reads
$$
\mc P(\vec{\varrho},\vec{\mc G})=\sum_{i=1}^p\sum_{b\in B_i}\omega^i_b\tr{\varrho^i M^i_b}.
$$

\end{document}